\newtheorem{theorem}{Theorem}[]
\newtheorem{corollary}{Corollary}[theorem]
\newtheorem{definition}{Definition}[]
\newcommand{\bpm}{\begin{pmatrix}}
\newcommand{\epm}{\end{pmatrix}}
\newcommand{\ket}[1]{\ensuremath{\left| #1 \right \rangle}}
\newcommand{\bra}[1]{\ensuremath{\left \langle #1 \right |}}
\newcommand{\braket}[2]{\ensuremath{\left\langle #1\left|#2 \right.\right\rangle}}
\newcommand{\ketbra}[2]{\ket{#1}\!\bra{#2}}
\newcommand{\tr}[1]{ \mathrm{tr}\left[#1\right] }
\newcommand{\kett}[1]{\ensuremath{\left | #1  \right \rrangle }}
\newcommand{\braa}[1]{\ensuremath{\left \llangle  #1 \right |}}
\newcommand{\braakett}[2]{\ensuremath{\left\llangle   #1\left|#2 \right.\right \rrangle}}
\newcommand{\kettbraa}[2]{\kett{#1}\! \braa{#2}}
\newcommand{\meas}{\mathcal{M}}
\newtheorem{example}{Example}[section]
\begin{document}
\title{Supervised quantum machine learning models are kernel methods}
\author{Maria Schuld}
\affiliation{Xanadu, Toronto, ON, M5G 2C8, Canada}

\begin{abstract}
    With near-term quantum devices available and the race for fault-tolerant quantum computers in full swing, researchers became interested in the question of what happens if we replace a supervised machine learning model with a quantum circuit. While such ``quantum models'' are sometimes called ``quantum neural networks'', it has been repeatedly noted that their mathematical structure is actually much more closely related to kernel methods: they analyse data in high-dimensional Hilbert spaces to which we only have access through inner products revealed by measurements. This technical manuscript summarises and extends the idea of systematically rephrasing supervised quantum models as a kernel method. With this, a lot of near-term and fault-tolerant quantum models can be replaced by a general support vector machine whose kernel computes distances between data-encoding quantum states. Kernel-based training is then guaranteed to find better or equally good quantum models than variational circuit training. Overall, the kernel perspective of quantum machine learning tells us that the way that data is encoded into quantum states is the main ingredient that can potentially set quantum models apart from classical machine learning models. 
\end{abstract}

\maketitle

\section{Motivation}

\begin{figure}[b]
    \centering
    \includegraphics[width=0.9\textwidth]{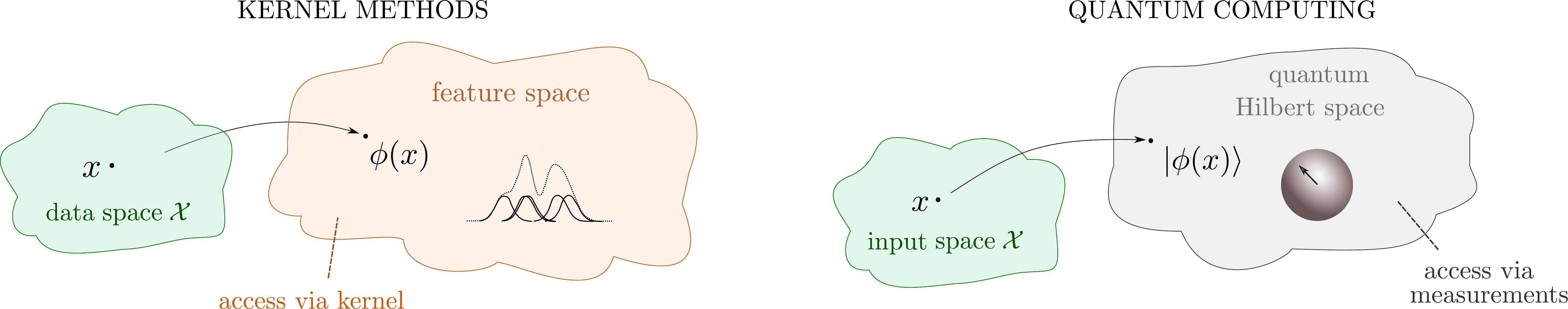}
    \caption{\textbf{Quantum computing and kernel methods are based on a similar principle.} Both have mathematical frameworks in which information is mapped into and then processed in high-dimensional spaces to which we have only limited access. In kernel methods, the access to the feature space is facilitated through \textit{kernels} or inner products of feature vectors. In quantum computing, access to the Hilbert space of quantum states is given by measurements, which can also be expressed by inner products of quantum states.}
    \label{fig:connection}
\end{figure}

The mathematical frameworks of quantum computing and kernel methods are strikingly similar: both describe how information is processed by mapping it to vectors that live in potentially inaccessibly large spaces, without the need of ever computing an explicit numerical representation of these vectors (Figure~\ref{fig:connection}). This similarity is particularly obvious -- and as we will see, useful -- in \textit{quantum machine learning}, an emerging research field that investigates how quantum computers can learn from data \citep{wittek2014quantum, biamonte2017quantum, schuld2018supervised}. If the data is ``classical'' as in standard machine learning problems, quantum machine learning algorithms have to encode it into the physical states of quantum systems. This process is formally equivalent to a \textit{feature map} that assigns data to quantum states (see \cite{schuld2019quantum, havlivcek2019supervised} but also earlier notions in \cite{rebentrost2014quantum, chatterjee2016generalized, schuld2018circuit}). Inner products of such data-encoding quantum states then give rise to a kernel, a kind of similarity measure that forms the core concept of kernel theory.  

The natural shape of this analogy sparked more research in the past years, for example on training generative quantum models \citep{liu2018differentiable}, constructing kernelised machine learning models \citep{blank2020quantum}, understanding the separation between the computational complexity of quantum and classical machine learning \citep{havlivcek2019supervised, liu2020rigorous, huang2020power} or revealing links between quantum machine learning and maximum mean embeddings \citep{kubler2019quantum} as well as metric learning \citep{lloyd2020quantum}. But despite the growing amount of literature, a comprehensive review of the link between quantum computation and kernel theory, as well as its theoretical consequences, is still lacking. This technical manuscript aims at filling the gap by summarising, formalising and extending insights scattered across existing literature and ``quantum community folklore''. The central statement of this line of work is that quantum algorithms optimised with data can \textit{fundamentally} be formulated as a classical kernel method whose kernel is computed by a quantum computer. This statement holds both for the popular class of classically trained variational near-term algorithms (e.g., \cite{benedetti2019parameterized}) as well as for more sophisticated fault-tolerant algorithms \textit{trained} by a quantum computer (e.g., \cite{rebentrost2014quantum}). It will be apparent that once the right ``spaces'' for the analysis are defined (as first proposed in \cite{havlivcek2019supervised}), the theory falls into place itself. This is in stark contrast to the more popular, but much less natural, attempt to force quantum theory into the shape of neural networks.\footnote{In some sense, many near-term approaches to quantum machine learning can be understood as a kernel method with a special kind of kernel, where the model (and possibly even the kernel \citep{lloyd2020quantum}) are trained like neural networks. This mix of both worlds makes quantum machine learning an interesting mathematical playground beyond the questions of asymptotic speedups that quantum computing researchers tend to ask by default. } 

A lot of the results presented here are of theoretical nature, but have important practical implications. Understanding quantum models as kernel methods means that the expressivity, optimisation and generalisation behaviour of quantum models is largely defined by the data-encoding strategy or \textit{quantum embedding} which fixes the kernel. Furthermore, it means that while the kernel itself may explore high-dimensional state spaces of the quantum system, quantum models can be trained and operated in a low-dimensional subspace. In contrast to the popular strategy of variational models (where a quantum algorithm depends on a tractable number of classical parameters that are optimised), we do not have to worry about finding the right variational circuit ansatz, or about how to avoid \textit{barren plateaus} problems \cite{mcclean2018barren, holmes2021connecting} -- but pay the price of having to compute pairwise distances between data points. 

For classical machine learning research, the kernel perspective can help to demystify quantum machine learning. A medium-term benefit may also derive from quantum computing's extensive tools that describe information in high-dimensional spaces, and possibly from interesting new kinds of kernels derived from physics. In the longer term, quantum computers promise access to fast linear algebra processing capabilities which are in principle able to deliver the polynomial speed-up that allows kernel methods to process big data without relying on approximations and heuristics. 

The manuscript is aimed at readers coming from \textit{either} a machine learning \textit{or} quantum computing background, but assumes an advanced level of mathematical knowledge of Hilbert spaces and the like (and there will be a lot of Hilbert spaces). Instead of giving lengthy introductions to both fields at the beginning, I will try to explain relevant concepts such as quantum states, measurements, or kernels as they are needed. Since neither kernel methods nor quantum theory are easy to digest, the next section will summarise all the main insights from a high-level point of view to connect the dots right from the start. 

A final side note may be useful: quantum computing researchers love to precede any concept with the word ``quantum''. In a young and explorative discipline like quantum machine learning (there we go!), this leads to very different ideas being labeled as ``quantum kernels'', ``quantum support vector machines'', ``quantum classifiers'' or even ``quantum neural networks''. To not add to this state of confusion I will -- besides standard technical terms -- only use the ``quantum'' prefix if a quantity is explicitly computed by a quantum algorithm (instead of being a mathematical construction in quantum theory). I will therefore speak of ``quantum models'' and ``quantum kernels'', but try to avoid constructions like ``quantum feature maps'' and ``quantum reproducing kernel Hilbert space''. 

\section{Summary of results}

\begin{figure*}[t]
    \centering
    \includegraphics[width=0.7\textwidth]{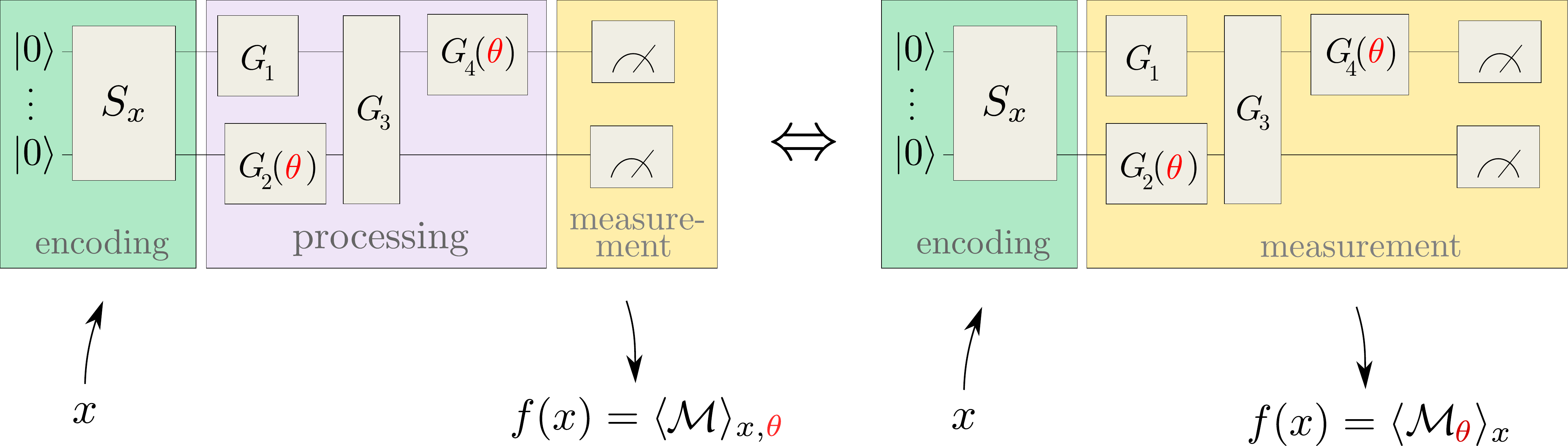}
    \caption{\textbf{Interpreting a quantum circuit as a machine learning model.} After encoding the data with the routine $S_x$, a quantum circuit ``processes'' the embedded input, followed by a measurement (left). The processing circuit may depend on classically trainable parameters, as investigated in near-term quantum machine learning with \textit{variational circuits}, or it may consist of standard quantum routines such as amplitude amplification or quantum Fourier transforms. The expected outcome of the measurement $\mathcal{M}$ is interpreted as the model's prediction, which is deterministic (generative models, which would consider the measurement \textit{samples} as outputs, are not considered here). Since the processing circuit only changes the basis in which the measurement is taken, it can conceptually be understood as part of the measurement procedure (right). In this sense, quantum models consist of two parts, the data encoding/embedding and the measurement. Training a quantum model is the problem of finding the measurement that minimises a data-dependent cost function. Note that while the measurement could depend on trainable parameters I will not consider trainable embedding circuits here.}
    \label{fig:variational}
\end{figure*}

First, a quick overview of the scope. Quantum algorithms have been proposed for many jobs in supervised machine learning, but the majority of them replace the \textit{model}, such as a classifier or generator, with an algorithm that runs on a quantum computer. These algorithms -- I will call them \textit{quantum models} -- usually consist of two parts: the data encoding, which maps data inputs $x$ to quantum states $\ket{\phi(x)}$ (effectively \textit{embedding} them into the space of quantum states), and a measurement $\meas$. Statistical properties of the measurement are then interpreted as the output of the model. Training a quantum model means to find the measurement which minimises a cost function that depends on training data. This overall definition is fairly general, and it includes most near-term  supervised quantum machine learning algorithms as well as many more complex, fault-tolerant quantum algorithms (see Figure~\ref{fig:variational}). Throughout this manuscript I will interpret the expected measurement -- or in practice, the average over measurement outcomes -- as a prediction, but the results may carry over to other settings, such as generative quantum models (e.g., \citep{benedetti2019generative}). I will also consider the embedding fixed and not trainable as proposed in \cite{perez2020data, lloyd2020quantum}.

The bridge between quantum machine learning and kernel methods is formed by the observation that quantum models map data into a high-dimensional feature space, in which the measurement defines a linear decision boundary as shown in Figure~\ref{fig:linear_model}. Note that for this to hold we need to define the data-encoding density matrices $\rho(x) = \ketbra{\phi(x)}{\phi(x)}$ as the feature ``vectors''\footnote{The term \textit{feature vectors} derives from the fact that they are elements of a vector space, not that they are vectors in the sense of the space $\mathbb{C}^N$ or $\mathbb{R}^N$.} instead of the Dirac vectors $\ket{\phi(x)}$ (see Section~\ref{ssec:linear}). This was first proposed in Ref. \cite{havlivcek2019supervised}. Density matrices are alternative descriptions of quantum states as Hermitian operators which are handy because they can also express probability distributions over quantum states (in which case they are describing so-called \textit{mixed} instead of \textit{pure} states). We can therefore consider the space of complex matrices enriched with the Hilbert-Schmidt inner product as the feature space of a quantum model and state:
\begin{quote}
      \textit{1. Quantum models are linear models in the ``feature vectors'' $\rho(x)$.}
\end{quote}
As famously known from support vector machines \cite{steinwart2008support}, linear models in feature spaces can be efficiently evaluated and trained if we have access to inner products of feature vectors, which is a function $\kappa$ in two data points $x, x'$ called the \textit{kernel}. Kernel theory essentially uses linear algebra and functional analysis to derive statements about the expressivity, trainability and generalisation power of linear models in feature spaces directly from the kernel. For us this means that we can learn a lot about the properties of quantum models if we study inner products $\kappa(x, x') = \tr{ \rho(x') \rho(x) }$, or, for pure states, $\kappa(x, x') = |\braket{\phi(x')}{\phi(x)}|^2$ (see in particular Ref. \cite{huang2020power}). I will call these functions ``quantum kernels'.

\begin{figure}[t]
    \centering
    \includegraphics[width=0.4\textwidth]{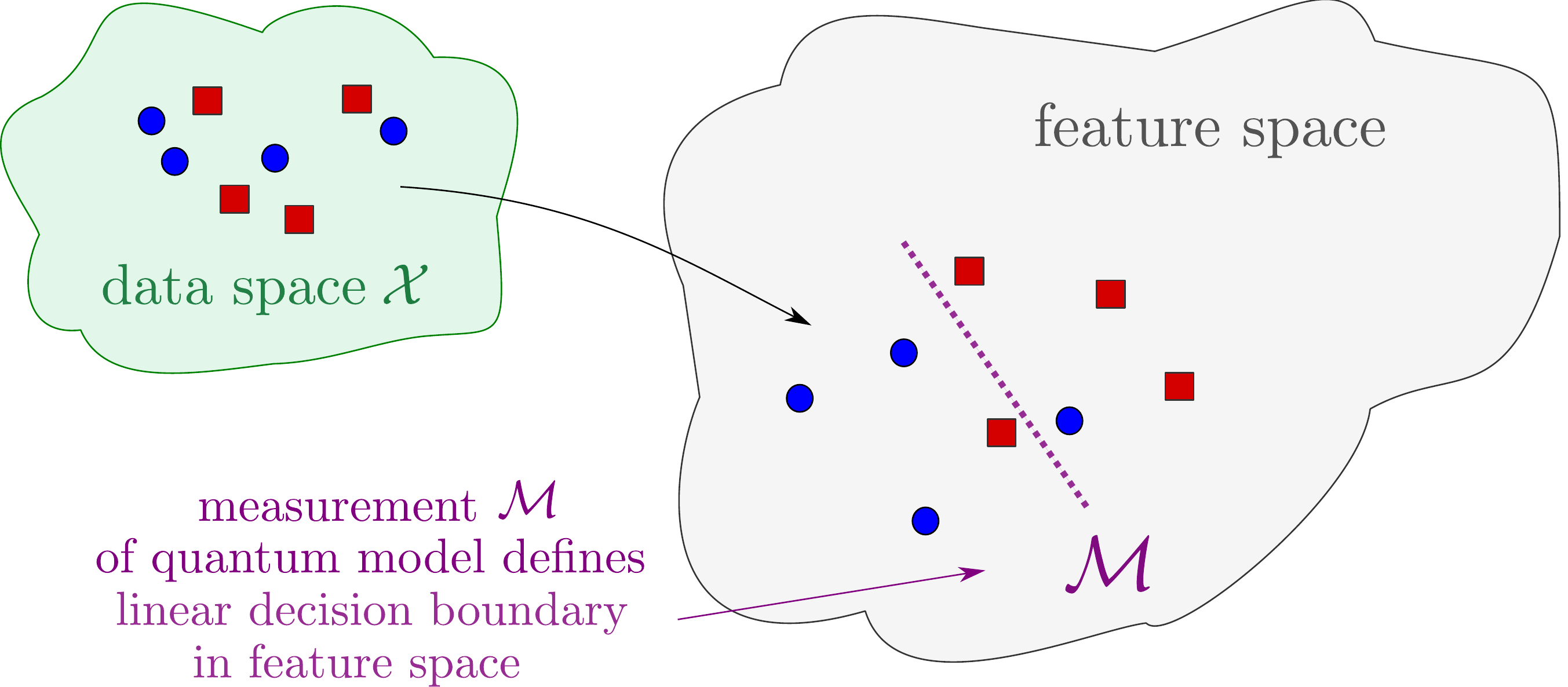}
    \caption{\textbf{Quantum models as linear models in a feature space.} A quantum model can be understood as a model that maps data into a feature space in which the measurement defines a linear decision boundary. This feature space is not identical to the Hilbert space of the quantum system. Instead we can define it as the space of complex matrices enriched with the Hilbert-Schmidt inner product -- which is the space where density matrices live in.}
    \label{fig:linear_model}
\end{figure}

To understand what kernels can tell us about quantum machine learning, we need another important concept from kernel theory: the \textit{reproducing kernel Hilbert space} (RKHS). An RKHS is an alternative feature space of a kernel -- and therefore reproduces all ``observable'' behaviour of the machine learning model. More precisely, it is a feature space of \textit{functions} $x \to g_x(\cdot) = \kappa(x, \cdot)$, which are constructed from the kernel. The RKHS contains one such function for every input $x$, as well as their linear combinations (for example, for the popular Gaussian kernel these linear combinations are sums of Gaussians centered in the individual data points). In an interesting -- and by no means trivial -- twist, these functions happen to be identical to the linear models in feature space. For quantum machine learning this means that the space of quantum models and the RKHS of the quantum kernel contain exactly the same functions (see Section~\ref{ssec:rkhs_quantum_models}). What we gain is an alternative representation of quantum models, one that only depends on the quantity $\tr{\rho(x') \rho(x)}$ (see Figure~\ref{fig:quantum_kernel_theory}).

\begin{figure*}[t]
    \centering
    \includegraphics[width=0.7\textwidth]{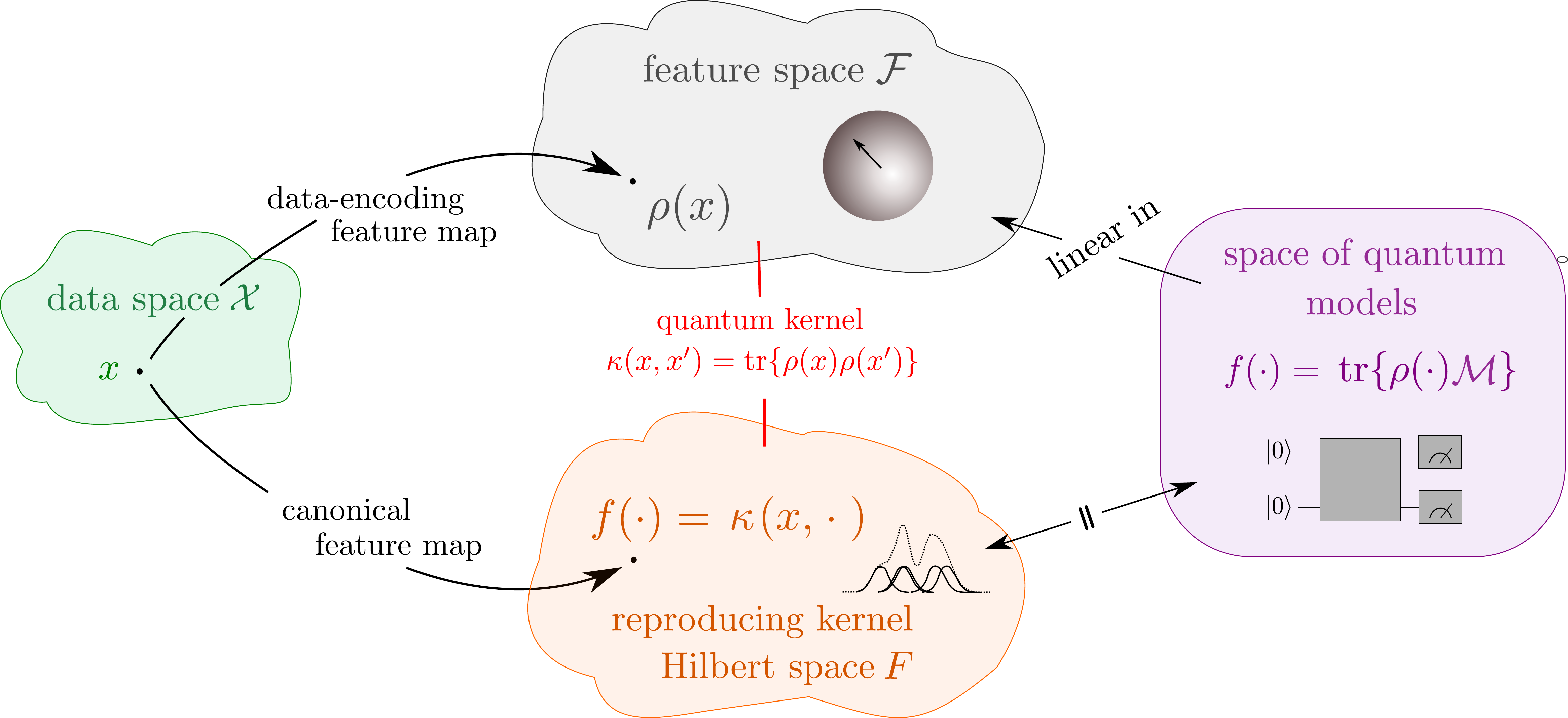}
    \caption{\textbf{Overview of the link between quantum models and kernel methods.} The strategy with which data is encoded into quantum states is a feature map from the space of data to the feature space $\mathcal{F}$ ``of density matrices'' $\rho$. In this space, quantum models can be expressed as a linear model whose decision boundary is defined by the measurement. According to kernel theory, an alternative feature space with the same kernel is the RKHS $F$, whose vectors are functions arising from fixing one entry of the kernel (i.e., the inner product of data-encoding density matrices). The RKHS is equivalent to the space of quantum models, which are linear models in the data-encoding feature space. These connections can be used to study the properties of quantum models as learners, which turn out to be largely determined by the kernel, and therefore by the data-encoding strategy. }
    \label{fig:quantum_kernel_theory}
\end{figure*}

This alternative representation can be very useful for all sorts of things. For example, it allows us to study the universality of quantum models as function approximators by investigating the universality of the RKHS, which in turn is a property of the quantum kernel. But probably the most important use is to study  optimisation: minimising typical cost functions over the space of quantum models is equivalent to minimising the same cost over the RKHS of the quantum kernel (see Section~\ref{ssec:optimsing_rkhs}). The famous \textit{representer theorem} uses this to show  that ``optimal models'' (i.e., those that minimise the cost) can be written in terms of the quantum kernel as 
\begin{equation}\label{eq:preview_optimal}
f_{\rm opt}(x) = \sum_{m=1}^M \alpha_m \tr{\rho(x^m) \rho(x)} =  \tr{ \left(\sum_{m=1}^M \alpha_m\rho(x^m) \right) \rho(x)},
\end{equation} 
where $x^m, m=1,\dots,M$ is the training data and $\alpha_m \in \mathbb{R}$ (see Section~\ref{ssec:meas_expansion}). Looking at the expression in the round brackets, this enables us to say something about optimal measurements for quantum models:
\begin{quote}
    \textit{2. Quantum models that minimise typical machine learning cost functions have measurements that can be written as ``kernel expansions in the data'', $\meas = \sum_m \alpha_m \rho(x^m)$.   }
\end{quote}
In other words, we are guaranteed that the best measurements for machine learning tasks only have $M$ degrees of freedom $\{\alpha_m\}$, rather than the $\mathcal{O}(2^{2n})$ degrees of freedom needed to express a general measurement on a standard $n$-qubit quantum computer. Even more, if we include a regularisation term into the cost function, the kernel defines entirely which models are actually penalised or preferred by regularisation. Since the kernel only depends on the way in which data is encoded into quantum states, one can conclude that data encoding fully defines the minima of a given cost function used to train quantum models (see Section~\ref{ssec:regularisation}).

But how can we \textit{find} the optimal model in Eq.~(\ref{eq:preview_optimal})? We could use the near-term approach to quantum machine learning and simply train an ansatz, hoping that it learns the right measurement. But as illustrated in Figure \ref{fig:training}, variational training typically only searches through a small subspace of all possible quantum models/measurements. This has a good reason: to train a circuit that can express any quantum model (and is hence guaranteed to find the optimal one) would require parameters for all $\mathcal{O}(2^{2n})$ degrees of freedom, which is intractable for all but toy models. However, also here kernel theory can help: not only is the optimal measurement defined by $M \ll 2^{2n}$ degrees of freedom, finding the optimal measurement has the same favourable scaling (see Section~\ref{ssec:convex}) if we switch to a kernel-based training approach.
\begin{quote} 
    \textit{3. The problem of finding the optimal measurement for typical machine learning cost functions trained with $M$ data samples can be formulated as an $M$-dimensional optimisation problem. }
\end{quote}
If the loss is convex, as is common in machine learning, the optimisation problem is guaranteed to be convex as well. Hence, under rather general assumptions, we are guaranteed that the ``hard'' problem of picking the best quantum model shown in Eq.~(\ref{eq:preview_optimal}) is tractable and of a simple structure, even without reverting to variational heuristics. In addition, convexity -- the property that there is only one global minimum -- may help with trainability problems like the notorious ``barren plateaus'' \citep{mcclean2018barren} in variational circuit training. If the loss function is the hinge loss, things reduce to a standard support vector machine with a quantum kernel, which is one of the algorithms proposed in \cite{schuld2019quantum} and \cite{havlivcek2019supervised}. 

\begin{figure*}[t]
    \centering
    \includegraphics[width=0.7\textwidth]{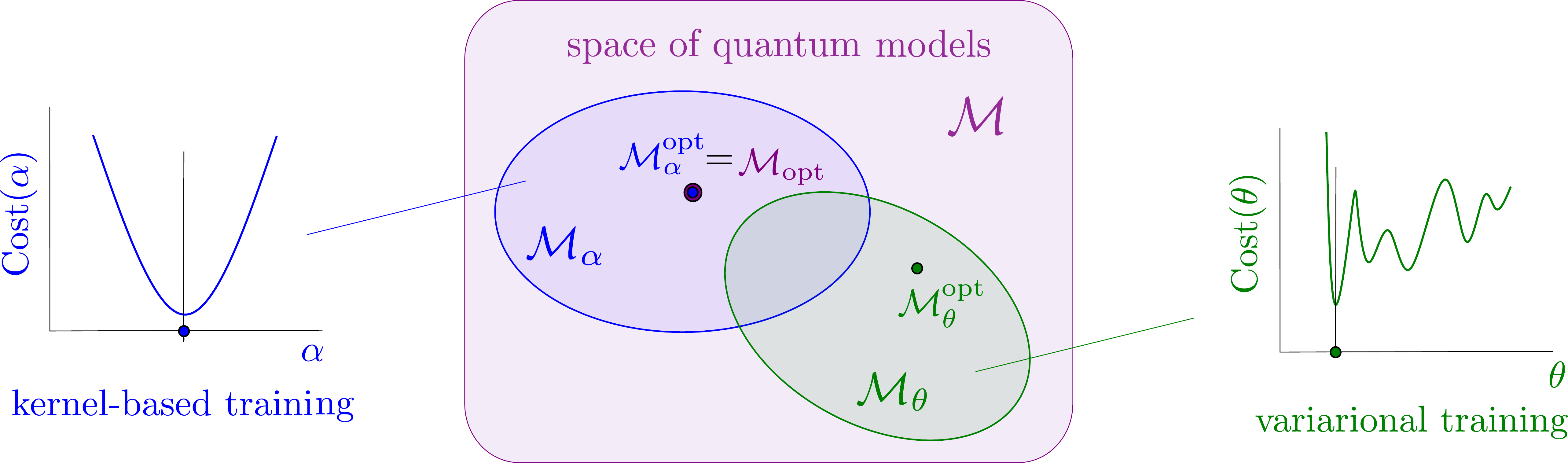}
    \caption{\textbf{Kernel-based training vs. variational training.} Training a quantum model as defined here tries to find the optimal measurement $\mathcal{M}_{\rm opt}$ over all possible quantum measurements. Kernel theory guarantees that in most cases this optimal measurement will have a representation that is a linear combination in the training data with coefficients $\alpha = (\alpha_1,\dots, \alpha_M)$. Kernel-based training therefore optimises over the parameters $\alpha$ directly, effectively searching for the best model in an $M$-dimensional subspace spanned by the training data (blue). We are guaranteed that $\mathcal{M}_{\alpha}^{\rm opt} = \mathcal{M}_{\rm opt}$, and if the loss is convex this is the only minimum, which means that kernel-based training will find the best measurement out of all measurements. Variational training parametrises the measurement instead by a general ansatz that depends on $K$ parameters $\theta = (\theta_1, \dots, \theta_K)$, and tries to find the optimal measurement $\mathcal{M}_{\theta}^{\rm opt}$ in the subspace explored by the ansatz. This $\theta$-subspace is not guaranteed to contain the globally optimal measurement  $\mathcal{M}_{\rm opt}$, and optimisation is usually non-convex. We are therefore guaranteed that kernel-based training finds better or the same minima to variational training, but at the expense of having to compute pairwise distances of data points for training and classification.
    }
    \label{fig:training}
\end{figure*}

Altogether, approaching quantum machine learning from a kernel perspective can have profound implications for the way we think about it.  Firstly, most quantum models can be formulated as general support vector machines (in the sense of \cite{steinwart2008support}) with a kernel evaluated on a quantum computer. As a corollary, we know that the measurements of optimal quantum models live in a low-dimensional subspace spanned by the training data, and that we can train in that space. Kernel-based training is guaranteed to find better minima -- or as phrased here, measurements -- than variational circuit training, at the expense of having to evaluate pair-wise distances of data points in feature space. (In the conclusion I will discuss how larger fault-tolerant quantum computers could potentially help with this as well!). Secondly, if the kernel defines the model, and the data encoding defines the kernel, we have to be very aware of the data encoding strategy we use in quantum machine learning -- a step that has often taken the backseat over other parts of quantum models. Thirdly, since quantum models can always be rewritten as a classical model plus quantum kernel, the separation between classical and quantum machine learning lies only in the ability of quantum computers to implement classically hard kernels. The first steps into investigating such separations have been made in papers like \cite{liu2020rigorous, huang2020power}, but it is still unclear whether any \textit{useful} applications turn out to be enabled solely by quantum computers.  

The remainder of the paper will essentially follow the structure of this synopsis to discuss every statement in more mathematical detail.

\section{Quantum computing, feature maps and kernels}\label{sec:definitions}

Let us start by laying the ground work for the kernel perspective on quantum machine learning. First I review the link between the process of encoding data into quantum states and feature maps, and construct the ``quantum kernel'' that we will use throughout the manuscript. I will then give some examples of data-encoding feature maps and quantum kernels, including a general description that allows us to understand these kernels via Fourier series. 

\subsection{Encoding data into quantum states is a feature map}\label{ssec:encoding_feat_map}

First, a few important concepts from quantum computing, which can be safely skipped by readers with a background in the field. Those who deem the explanations to be too casual shall be referred to the wonderful script by Michael Wolf \citep{wolf}.

\begin{quote}
    \textit{Quantum state.} According to quantum theory, the state of a quantum system is fully described by a length-1 vector $\ket{\psi}$ (or, more precisely, a ray represented by this vector) in a complex Hilbert space $\mathcal{H}$. The notation $\ket{\cdot}$ can be intimidating, but simply reminds of the fact that the Hilbert space has an inner product $\langle \cdot , \cdot \rangle$, which for Hilbert spaces describing quantum systems is denoted as $\braket{\cdot}{\cdot}$, and that its vectors constitute ``the right side'' of the inner product. Quantum theory textbooks then introduce the left side of the inner product as a functional $\bra{\varphi}$ from a dual space $\mathcal{H}^*$ acting on elements of the original Hilbert space. Mainstream quantum computing considers rather simple quantum systems of $n$ binary subsystems called ``qubits'', whose Hilbert space is the $ \mathbb{C}^{2^n}$. The dual space $\mathcal{H}^*$ can then be thought of as the space of complex $2^n$-dimensional ``row vectors''. A joint description of two quantum systems $\ket{\psi}$ and $\ket{\varphi}$ is expressed by the tensor product $\ket{\psi} \otimes \ket{\phi}$.\\
    
    \textit{Density matrix.} There is an alternative representation of a quantum state as a Hermitian operator called a \textit{density matrix}. The density matrix corresponding to a state vector $\ket{\psi}$ reads
    \begin{equation}
        \rho = \ketbra{\psi}{\psi}.
    \end{equation} 
    If we represent quantum states as vectors in $\mathbb{C}^{2^n}$, then the corresponding density matrix is given by the outer product of a vector with itself -- resulting in a matrix (and hence the name). The density matrix contains all observable information of $\ket{\psi}$, but is useful to model probability distributions $\{p_k\}$ over multiple quantum states $\{\ketbra{\psi_k}{\psi_k}\}$ as so-called \textit{mixed states}
    \begin{equation}
        \rho = \sum_k p_k \ketbra{\psi_k}{\psi_k},
    \end{equation} 
    without changing the equations of quantum theory. For simplicity I will assume that we are dealing with pure states in the following, but as far as I know everything should hold for mixed states as well.\\

    \textit{Quantum computations.} A quantum computation applies physical operations to quantum states, which -- in analogy to classical circuits -- are known as ``quantum gates''. The gates are applied to a small amount of qubits at a time. A collection of quantum gates (possibly followed by a measurement, which will be explained below) is called a \textit{quantum circuit}. Any physical operation acting on the quantum system maps from a density matrix $\rho$ to another density matrix $\rho'$. In the most basic setting, such a transformation is described by a unitary operator $U$, with $\rho' = U^{\dagger} \rho U$, or $\ket{\psi'} = U \ket{\psi}$.\footnote{The unitary operator is the quantum equivalent of a stochastic matrix which acts on vectors that represent discrete probability distributions.} Unitary operations are length-preserving linear transformations, which is why we often say that a unitary ``rotates'' the quantum state. In the finite-dimensional case, a unitary operator can conveniently be represented by a unitary matrix, and the evolution of a quantum state becomes a matrix multiplication.
\end{quote}

Consider a physical operation or quantum circuit $U(x)$ that depends on data $x \in \mathcal{X}$ from some data domain $\mathcal{X}$. For example, if the domain is the set of all bit strings of length $n$, the quantum circuit may apply specific operations only if bits are $1$ and do nothing if they are $0$. After the operation, the quantum state $\ket{\phi(x)} = U(x) \ket{\psi}$ depends on $x$. In other words, the data-dependent operation ``encodes'' or ``embeds'' $x$ into a vector $\ket{\phi(x)}$ from a Hilbert space (and I will use both terms interchangeably). This is a common definition of a feature map in machine learning, and we can say that \textit{any data-dependent quantum computation implements a feature map}. 

While from a quantum physics perspective it seems natural -- and has been done predominantly in the early literature -- to think of $x \rightarrow \ket{\phi(x)}$ as the feature map that links quantum computing to kernel methods, we will see below that quantum models are \textit{not} linear in the Hilbert space of the quantum system \cite{havlivcek2019supervised}, which means that the apparatus of kernel theory does not apply elegantly. Instead, I will define $x \to \rho(x)$ as the feature map and call it the \textit{data-encoding feature map}. Note that consistent with the proposed naming scheme, the term ``quantum feature map'' would be misleading, since the result of the feature map is a state, which without measurement is just a mathematical concept.

\begin{definition}[Data-encoding feature map]\label{def:data_enc_feat_map}
Given a $n$-qubit quantum system with states $\ket{\psi}$, and let $\mathcal{F}$ be the space of complex-valued $2^n \times 2^n$-dimensional matrices equipped with the Hilbert-Schmidt inner product $\langle \rho, \sigma \rangle_{\mathcal{F}} = \mathrm{tr} \{ \rho^{\dagger} \sigma \}$ for $\rho, \sigma \in \mathcal{F}$. The data-encoding feature map is defined as the transformation 
\begin{equation}
    \phi: \mathcal{X} \rightarrow \mathcal{F},
\end{equation}
\begin{equation} 
    \phi(x) = \ketbra{\phi(x)}{\phi(x)} = \rho(x),
\end{equation}
and can be implemented by a data-encoding quantum circuit $U(x)$.
\end{definition}
While density matrices of qubit systems live in a subspace of $\mathcal{F}$ (i.e., the space of positive semi-definite trace-class  operators), it will be useful to formally define the data-encoding feature space as above. Firstly, it makes sure that the feature space is a Hilbert space, and secondly, it allows measurements to live in the same space \cite{wolf}, which we will need to define linear models in $\mathcal{F}$. Section \ref{ssec:vectorised} will discuss that this definition of the feature space is equivalent to the tensor product space of complex vectors $\ket{\psi} \otimes \ket{\psi^*}$ used in \cite{huang2020power}.

\subsection{The data-encoding feature map gives rise to a kernel}

Let us turn to kernels.

\begin{quote}
    \textit{Kernels.} Unsurprisingly, the central concept of kernel theory are kernels, which in the context of machine learning are defined as real or complex-valued positive definite functions in two data points, $\kappa: \mathcal{X}\times \mathcal{X} \rightarrow \mathbb{K}$, where $\mathbb{K}$ can be $\mathbb{C} \text{ or }\mathbb{R}$. For every such function we are guaranteed that there exist at least one feature map such that inner products of feature vectors $\phi(x)$ from the feature Hilbert space $\mathcal{F}$ form the kernel, $\kappa(x, x') = \langle \phi(x'), \phi(x) \rangle_{\mathcal{F}}$. Vice versa, every feature map gives rise to a kernel. The importance of kernels for machine learning is that they are a means of ``computing'' in feature space without ever accessing or numerically processing the vectors $\phi(x)$: everything we need to do in machine learning can be expressed by inner products of feature vectors, instead of the feature vectors themselves. In the cases that are practically useful, these inner products can be computed by a comparably simple function. This makes the computations in intractably large spaces tractable.
\end{quote}

With the Hilbert-Schmidt inner product from Definition~\ref{def:data_enc_feat_map} we can immediately write down the kernel induced by the data-encoding feature map, which we will call the ``quantum kernel'' (since it is a function computed by a quantum computer):
\begin{definition}[Quantum kernel]\label{def:quantum_kernel}
Let $\phi$ be a data-encoding feature map over domain $\mathcal{X}$. A quantum kernel is the inner product between two data-encoding feature vectors $\rho(x), \rho(x')$ with $x, x' \in \mathcal{X}$,
\begin{equation}
    \kappa(x, x') = \tr{\rho(x') \rho(x)} = |\braket{\phi(x')}{\phi(x)}|^2.
\end{equation}
\end{definition}

To justify the term ``kernel'' we need to show that the quantum kernel is indeed a positive definite function. The quantum kernel is a product of the complex-valued kernel $\kappa_c(x, x') = \braket{\phi(x')}{\phi(x)}$ and its complex conjugate $\kappa_c(x, x')^* = \braket{\phi(x)}{\phi(x')} = \braket{\phi(x')}{\phi(x)}^*$. Since products of two kernels are known to be kernels themselves, we only have to show that the complex conjugate of a kernel is also a kernel. 
For any $x^m \in \mathcal{X}, m=1 \dots M$, and for any $c_m \in \mathbb{C}$, we have
\begin{align*}
    \sum_{m, m'} c_m c^*_{m'} \left(\kappa_c(x^m, x^{m'})\right)^* &= \sum_{m, m'} c_m c^*_{m'} \braket{\phi(x^{m})}{\phi(x^{m'})}\\
    &= \left(\sum_m c_m \bra{\phi(x^m)} \right) \left( \sum_m c^*_m \ket{\phi(x^m)}\right)\\
    &=  \Vert \sum_m c^*_m \ket{\phi(x^m)}  \Vert^2\\
    &\geq 0,
\end{align*}
which means that the complex conjugate of a kernel is also positive definite.

\begin{example}\label{ex:quantum_kernel}
\begin{figure}
    \centering
    \includegraphics[width = 0.3\textwidth]{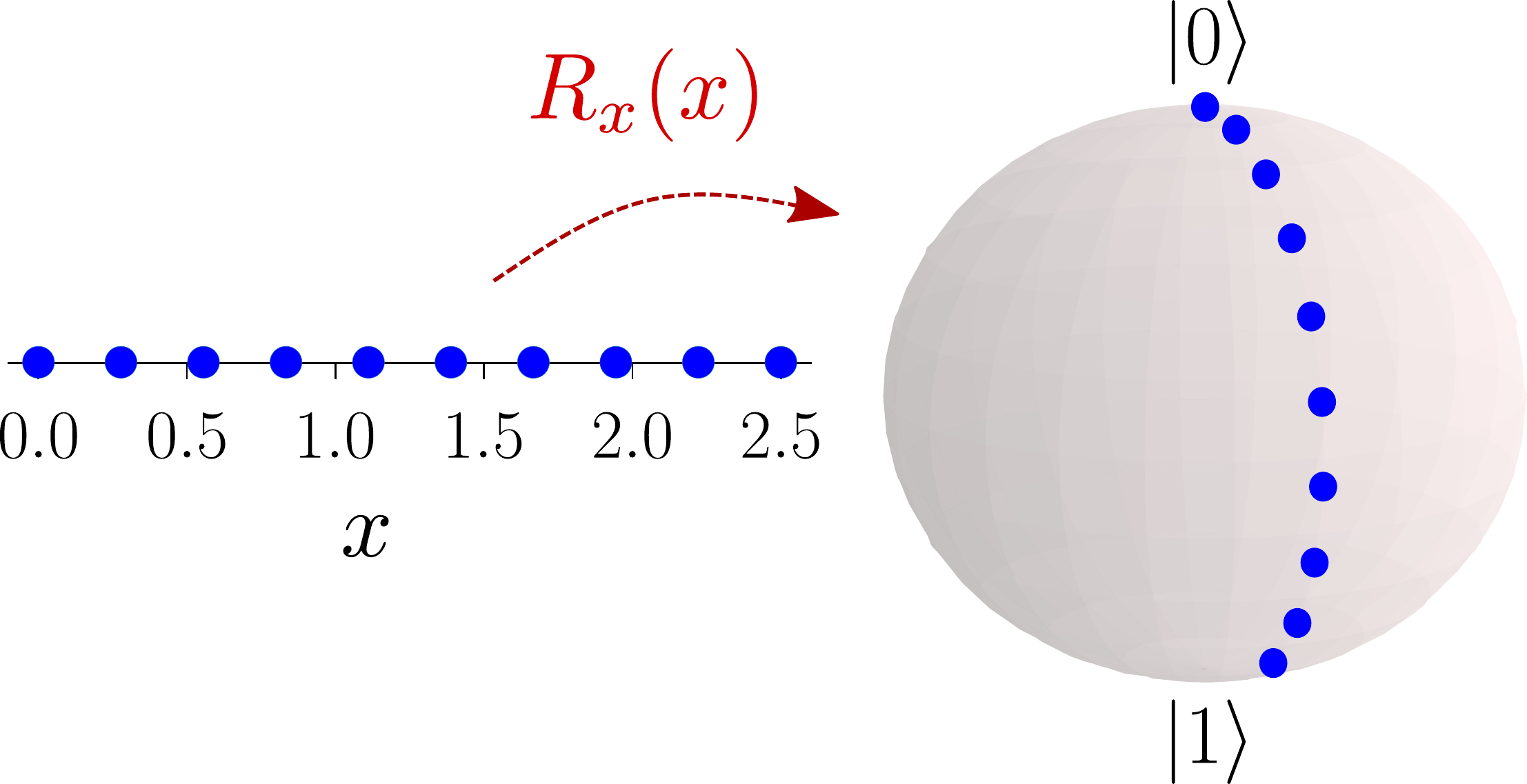}~~~~~~~~~\includegraphics[width = 0.2\textwidth]{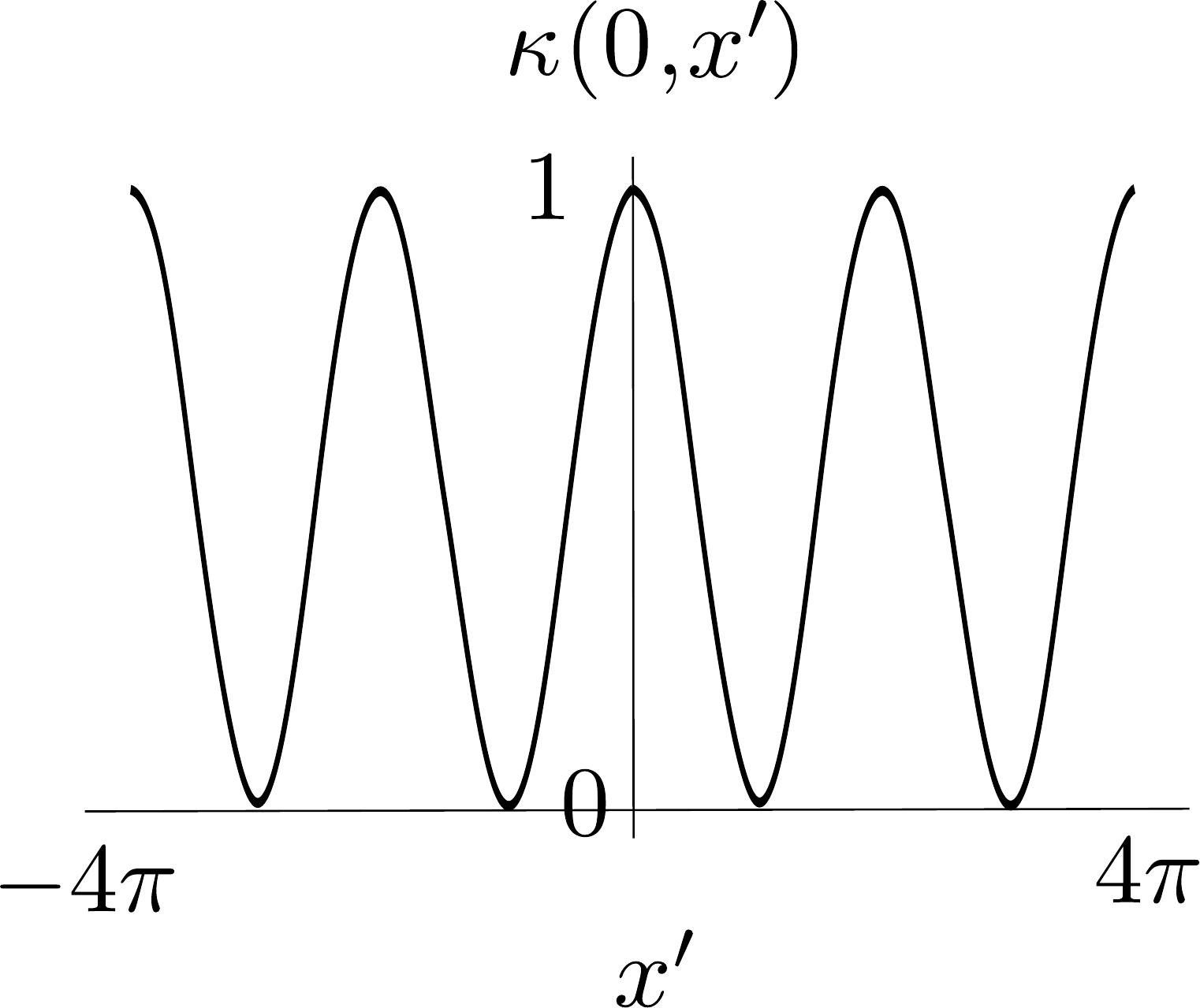}
    \caption{\textbf{Example of a data-encoding feature map and quantum kernel.} A scalar input is encoded into a single-qubit quantum state, which is represented as a point on a Bloch sphere. The embedding uses a feature map facilitated by a Pauli-X rotation. As can be seen when plotting the quantum states encoding equidistant points on an interval, the embedding preserves the structure of the data rather well, but is periodic. The embedding gives rise to a quantum kernel $\kappa$. When we fix the first input at zero, we can visualise the distance measure, which is a squared cosine function.}
    \label{fig:encoding_example}
\end{figure}

Consider an embedding that encodes a scalar input $x\in \mathbb{R}$ into the quantum state of a single qubit. The embedding is implemented by the Pauli-X rotation gate $R_X(x) = e^{- i \frac{x}{2} \sigma_x}$, where $\sigma_x$ is the Pauli-X operator. The data-encoding feature map is then given by $\phi: x \to \rho(x)$ with
\begin{equation}
\rho(x)  = \cos^2\left(\frac{x}{2}\right) \ketbra{0}{0} 
+ i \cos\left(\frac{x}{2}\right)\sin\left(\frac{x}{2}\right)  \ketbra{0}{1} 
-i \cos\left(\frac{x}{2}\right)\sin\left(\frac{x}{2}\right)  \ketbra{1}{0} 
+ \sin^2\left(\frac{x}{2}\right)  \ketbra{1}{1} ,
\end{equation}
and the quantum kernel becomes
\begin{equation}
    \kappa(x, x') = \left|\cos \left(\frac{x}{2}\right)\cos \left(\frac{x'}{2}\right)+ \sin \left(\frac{x}{2}\right)\sin \left(\frac{x'}{2}\right) \right|^2 =  \cos\left(\frac{x-x'}{2}\right)^2,
\end{equation}
which is a translation invariant squared cosine kernel. We will stick with this simple example throughout the following sections. It is illustrated in Figure~\ref{fig:encoding_example}.
\end{example}

\subsection{Making sense of matrix-valued feature vectors}\label{ssec:vectorised}

For readers that struggle to think of density matrices as feature vectors the data-encoding feature map (and further below, linear models) may be hard to visualise. I want to therefore insert a brief comment on an alternative version of the data-encoding feature map.

For all matters and purposes, the data-encoding feature map can be replaced by an alternative formulation
\begin{equation}
    \phi_v: \mathcal{X} \rightarrow \mathcal{F}_v \subset \mathcal{H} \otimes \mathcal{H}^*, 
\end{equation}
\begin{equation} 
    \phi_v =  \ket{\phi(x)} \otimes \ket{\phi^*(x)},
\end{equation}
where $\ket{\phi^*(x)}$ denotes the quantum state created from applying the complex conjugated (but not transposed) unitary $\ket{\phi^*(x)} = U^*(x) \ket{0}$ instead of $\ket{\phi(x)} = U(x) \ket{0}$, and $\mathcal{F}_v$ is the space of tensor products of a data-encoding Dirac vector with its complex conjugate. Note that since the complex conjugate of a unitary is a unitary, the unusual notation $\ket{\phi^*(x)}$ describes a valid quantum state which can be prepared by a physical circuit. The alternative feature space $\mathcal{F}_v$ is a subspace of the Hilbert space $\mathcal{H} \otimes \mathcal{H}^*$ with the property that inner products are real. One can show (but I won't do it here) that $\mathcal{F}_v$ is indeed a Hilbert space.

The inner product in this alternative feature space $\mathcal{F}_v$ is the absolute square of the inner product in the Hilbert space $\mathcal{H}$ of quantum states,
\begin{equation} 
\langle \psi | \varphi \rangle_{\mathcal{F}_v} =  |\braket{\psi}{\varphi}_{\mathcal{H}}|^2, 
\end{equation} 
and is therefore equivalent to the inner product in $\mathcal{F}$. This guarantees that it leads to the same quantum kernel.

The subscript $v$ refers to the fact that  $\ket{\phi(x)} \otimes \ket{\phi^*(x)}$ is a \textit{vectorisation} of $\rho(x)$, which reorders the $2^n$ matrix elements as a vector in $\mathbb{C}^{4n}$. 
To see this, let us revisit Example~\ref{ex:quantum_kernel} from above.
\begin{example}\label{ex:enc_kernel}
Consider the embedding from Example~\ref{ex:quantum_kernel}. The vectorised version of the data-encoding feature map is given by 
\begin{align}
    \phi_v : x \to \ket{\phi(x)} \otimes \ket{\phi^*(x)} &= \left(\cos \left(\frac{x}{2}\right) \ket{0} - i \sin\left(\frac{x}{2}\right) \ket{1}\right) \otimes \left(\cos \left(\frac{x}{2}\right) \ket{0} + i \sin\left(\frac{x}{2}\right) \ket{1}\right) \\
    &= 
    \begin{pmatrix}
      \cos^2\left(\frac{x}{2}\right)\\
      i \cos\left(\frac{x}{2}\right)\sin\left(\frac{x}{2}\right)\\
      -i \cos\left(\frac{x}{2}\right)\sin\left(\frac{x}{2}\right)\\
      \sin^2\left(\frac{x}{2}\right)
    \end{pmatrix},
\end{align}
and one can verify easily that the inner product of two such vectors leads to the same kernel.
\end{example}
Vectorised density matrices are common in the theory of open quantum systems \cite{jagadish2019invitation}, where they are written as $\kett{ \rho}$ (see also the \textit{Choi-Jamiolkowski isomorphism}). I will adopt this notation in Section \ref{ssec:meas_expansion} below to replace the Hilbert-Schmidt inner product $\tr{\rho^{\dagger} \sigma}$ with $\braakett{\rho }{ \sigma }$, which can be more illustrative at times. Note that the vectorised feature map, as opposed to Definition~\ref{def:data_enc_feat_map}, cannot capture mixed quantum states and is therefore less powerful.

\section{Examples of quantum kernels}\label{ssec:examples_kernel}

\begin{table}[t]
    \centering
    \def\arraystretch{1.5}
    \begin{tabular}{l l}
        \hline \hline
        \textbf{encoding} & \textbf{kernel $\kappa(x, x')$}\\ 
         basis encoding & $ \delta_{x, x}$ \\
         amplitude encoding & $ |\mathbf{x}^{\dagger}\mathbf{x}'|^2$\\
         repeated amplitude encoding & $ (|\mathbf{x}^{\dagger}\mathbf{x}'|^2)^r$\\
         rotation encoding & $\prod_{k=1}^N |\cos (x'_k - x_k)|^2$\\
         coherent state encoding & $ e^{- |\mathbf{x} - \mathbf{x}'|^2 }$\\
         general near-term encoding & $   \sum_{s, t \in \Omega} e^{i \mathbf{s} \mathbf{x}}e^{i \mathbf{t}\mathbf{x}'} c_{\mathbf{s}, \mathbf{t}}$\\
        \hline \hline
    \end{tabular}
    \caption{\textbf{Overview of data encoding strategies used in the literature and their quantum kernels.} If bold notation is used, the input domain is assumed to be the $\mathcal{X} \subseteq \mathbb{R}^N$.}
    \label{tab:encoding_kernels}
\end{table}

To fill the definition of the quantum kernel with life, let us have a look at typical information encoding strategies or data embeddings in quantum machine learning, and the kernels they give rise to (following \cite{schuld2019quantum}, and see Table~\ref{tab:encoding_kernels}). Note that it has been shown that there are kernels that cannot be efficiently computed on classical computers \citep{liu2020rigorous}.\footnote{ The argument basically defines a feature map based on a computation that is conjectured by quantum computing research to be classically hard. } As important as such results are, the question of quantum kernels that are actual useful for every-day problems is still wide open.

\subsection{Data encoding that relates to classical kernels}

The following strategies to encode data all have resemblance to kernels from the classical machine learning literature. This means that, sometimes up to an absolute square value, we can identify them with standard kernels such as the polynomial or Gaussian kernel. These kernels are plotted in Figure~\ref{fig:kernels_standard} using simulations of quantum computations implemented in the quantum machine learning software library PennyLane \citep{bergholm2018pennylane}. Note that I switch to bold notation when the input space is $\mathbb{C}^N$ or $\mathbb{R}^N$ \\

\textbf{Basis encoding.} Basis encoding is possibly the most common information encoding strategy in qubit-based quantum computing. Inputs $x \in \mathcal{X}$ are assumed to be binary strings of length $n$, and $\mathcal{X} = \{0, 1\}^{\otimes n}$. Every binary string has a unique integer representation $i_x = \sum_{k=0}^{n-1} 2^k x_k$. The data-encoding feature map maps the binary string to a computational basis state,
\begin{equation}\phi: x \rightarrow \ketbra{i_x}{i_x}. \end{equation}
The quantum kernel is given by the Kronecker delta
\begin{equation} \kappa(x,x') = |\braket{i_{x'}}{j_{x}}|^2 = \delta_{x,x'}, \end{equation}
which is of course a very strict similarity measure on input space, and arguably not the best choice of data encoding for quantum machine learning tasks. Basis encoding requires $\mathcal{O}(n)$ qubits.\\

\textbf{Amplitude encoding.} Amplitude encoding assumes that $\mathcal{X} = \mathbb{C}^{2^n}$, and that the inputs are normalised as $\|\mathbf{x}\|^2 = \sum_i |x_i|^2 = 1$. The data-encoding feature map associates each input with a quantum state whose amplitudes in the computational basis are the elements in the input vector,
\begin{equation}\phi: \mathbf{x} \rightarrow  \ketbra{\mathbf{x}}{\mathbf{x}} =  \sum_{i,j=1}^N x_i x^*_j \ketbra{i}{j}. \end{equation}
This data-encoding strategy leads to an identity feature map, which can be implemented by a non-trivial quantum circuit (for obvious reasons also known as ``arbitrary state preparation''), which takes time $\mathcal{O}(2^n)$ \citep{iten2016quantum}. The quantum kernel is the absolute square of the linear kernel
\begin{equation} \kappa(\mathbf{x},\mathbf{x}') = |\braket{\mathbf{x}'}{\mathbf{x}}|^2 = |\mathbf{x}^{\dagger}\mathbf{x}'|^2. \end{equation}
It is obvious that this quantum kernel does not add much power to a linear model in the original feature space, and it is more of interest for theoretical investigations that want to eliminate the effect of the feature map. Amplitude encoding requires $\mathcal{O}(n)$ qubits.\\
 
\textbf{Repeated amplitude encoding.} Amplitude encoding can be repeated $r$ times, 
\begin{equation} \phi: \mathbf{x} \rightarrow \ketbra{\mathbf{x}}{\mathbf{x}}  \otimes \cdots \otimes \ketbra{\mathbf{x}}{\mathbf{x}}  \end{equation}
to get powers of the quantum kernel in amplitude encoding
\begin{equation} \kappa(\mathbf{x},\mathbf{x}') = (|\braket{\mathbf{x}'}{\mathbf{x}}|^2 )^r = (|(\mathbf{x}')^{\dagger}\mathbf{x}|^2)^r. \end{equation}
A constant non-homogenity can be added by extending the original input with constant dummy features. Repeated amplitude encoding requires $\mathcal{O}(r n)$ qubits.\\

\textbf{Rotation encoding.} Rotation encoding is a qubit-based embedding that assumes $\mathcal{X} = \mathbb{R}^{n}$ (where $n$ is again the number of qubits) without any normalisation condition. Since it is $2\pi$-periodic one may want to limit $\mathbb{R}^{n}$ to the hypercube  $[0, 2\pi]^{\otimes n}$. The $i$th feature $x_i$ is encoded into the $i$th qubit via a Pauli rotation. For example, a Pauli-Y rotation puts the qubit into state $\ket{q_i(x_i)} = \cos(x_i) \ket{0} + \sin(x_i) \ket{1}$. The data-encoding feature map is therefore given by
\begin{equation}\phi: \mathbf{x} \rightarrow \ketbra{\phi(\mathbf{x})}{\phi(\mathbf{x})} \text{ with } \ket{\phi( \mathbf{x})} = \sum_{q_1,\dots, q_n = 0}^{1} \prod_{k=1}^n \cos (x_k)^{q_k}  \sin (x_k)^{1-q_k} \ket{q_1, \dots, q_n}, \end{equation}
and the corresponding quantum kernel is related to the cosine kernel:
\begin{equation} \kappa(\mathbf{x},\mathbf{x}') = \prod_{k=1}^n  |\sin x_k \sin x'_k + \cos x_k \cos x'_k|^2 = \prod_{k=1}^n |\cos (x_k - x'_k)|^2. \end{equation}
Rotation encoding requires $\mathcal{O}(n)$ qubits.\\

\textbf{Coherent state encoding.} Coherent states are known in the field of quantum optics as a description of light modes. Formally, they are superpositions of so called \textit{Fock states}, which are basis states from an infinite-dimensional discrete basis $\{\ket{0}, \ket{1}, \ket{2},...\}$, instead of the binary basis of qubits. 
A coherent state has the form
\begin{equation} \ket{\alpha} = e^{- \frac{|\alpha|^2}{2}} \sum\limits_{k=0}^{\infty} \frac{\alpha^k}{\sqrt{k!}} \ket{k},\end{equation}
for $\alpha \in \mathbb{C}$.
Encoding a real scalar input $x_i \in \mathbb{R}$ into a coherent state $\ket{\alpha_{x_i}}$, corresponds to a data-encoding feature map with an infinite-dimensional feature space, 
\begin{equation}\phi: x_i   \rightarrow \ketbra{\alpha_{x_i}}{\alpha_{x_i}}, \text{ with } \ket{\alpha_{x_i}} = e^{- \frac{|x_i|^2}{2}} \sum\limits_{k=0}^{\infty} \frac{x_i^k}{\sqrt{k!}} \ket{k} .  \end{equation}
We can encode a real vector $\mathbf{x} =(x_1,...,x_n)$ into $n$ joint coherent states,
\begin{equation}  \ketbra{\alpha_{\mathbf{x}}}{\alpha_{\mathbf{x}}} = \ketbra{\alpha_{x_1}}{\alpha_{x_1}} \otimes  \dots \otimes  \ketbra{\alpha_{x_n}}{\alpha_{x_n}}. \end{equation}
The quantum kernel is a Gaussian kernel \cite{chatterjee2016generalized}:
\begin{equation}\kappa(\mathbf{x}, \mathbf{x}') =  \left|e^{- \left( \frac{ |\mathbf{x}|^2}{2} +\frac{|\mathbf{x}'|^2}{2} - \mathbf{x}^{T}\mathbf{x}'   \right)}\right|^2 = e^{- |\mathbf{x} - \mathbf{x}'|^2 } \end{equation}
Preparing coherent states can be done with displacement operations in quantum photonics.

\begin{figure}
    \centering
    \includegraphics[width=\textwidth]{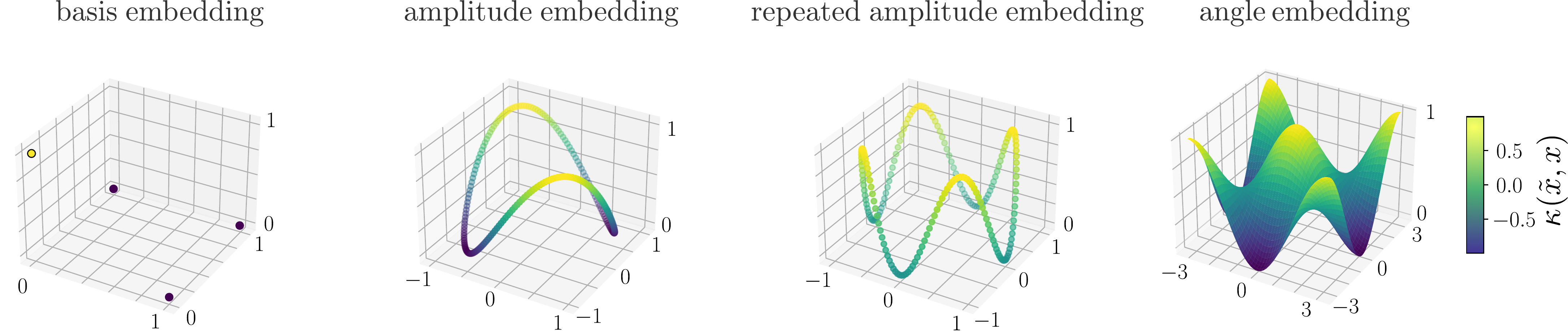}
    \caption{\textbf{Quantum kernels of different data embeddings.} Plots of some of the functions $\kappa(\tilde{x}, x)$ for the kernels introduced above, using $\mathbf{x} = (x_1, x_2) \in \mathbb{R}^2$ for illustration purposes. The first entry $\tilde{\mathbf{x}}$ is fixed at $\tilde{\mathbf{x}} = (0,0)$ for basis and rotation embedding, and at $\tilde{\mathbf{x}}= (\frac{1}{\sqrt{2}}, \frac{1}{\sqrt{2}})$ for the variations of amplitude embedding. The second value is depicted as the x-y plane. }
    \label{fig:kernels_standard}
\end{figure}

\subsection{Fourier representation of the quantum kernel}

It is suspicious that all embeddings plotted in Figure~\ref{fig:kernels_standard} have a periodic, trigonometric structure. This is a fundamental characteristic of how physical parameters enter quantum states. To see this we will define a general class of embeddings (also called ``time-evolution encoding'') that is used a lot in near-term quantum machine learning, and which includes all examples above if we allow for classical pre-processing of the features. This strategy assumes that $\mathcal{X} = \mathbb{R}^N$ for some arbitrary $N$ (whose relation to the number of qubits $n$ depends on the embedding), which means that I will stick to the bold notation. The embedding of $x_i$ is executed by gates of the form $e^{-i x_i G_i }$ where $G_i$ is $d_i \leq 2^n$-dimensional Hermitian operator called the \textit{generating Hamiltonian}. For the popular choice of Pauli rotations, $G_i = \frac{1}{2} \sigma$ with the Pauli operator $\sigma \in \{\sigma_z, \sigma_y, \sigma_z\}$. The gates can be applied to different qubits as in rotation encoding, or to the same qubits, and to be general we allow for arbitrary quantum computations between each encoding gate. 

Refs. \cite{vidal2019input} and \cite{schuld2020effect} showed that the Dirac vectors $\ket{\phi( \mathbf{x})}$ can be represented in terms of periodic functions of the form $e^{i x_i \omega}$, where $\omega \in \mathbb{R}$ can be interpreted as a frequency. The frequencies involved in the construction of the data-encoding feature vectors are solely determined by the generating Hamiltonians $\{G_i\}$ of the gates that encode the data. For popular choices of Hamiltonians, the frequencies $\omega$ are integer-valued, which means that the feature space is constructed from \textit{Fourier basis functions} $e^{i x_i n}, n \in \mathbb{Z}$. This allows us to describe and analyse the quantum kernel with the tools of Fourier analysis.

Let me state the result for the simplified case that each input $x_i$ is only encoded once, and that all the encoding Hamiltonians are the same ($G_1 = \dots = G_N = G$). The proof is deferred to Appendix~\ref{app:fourier_proof}, which also shows how our example of Pauli-X encoding can be cast as a Fourier series.

\begin{theorem}[Fourier representation of the quantum kernel]\label{thm:kernel_fourier}
Let $\mathcal{X} = \mathbb{R}^N$ and $S(\mathbf{x})$ be a quantum circuit that encodes the data inputs $\mathbf{x} = (x_1, \dots, x_N) \in \mathcal{X}$ into a $n$-qubit quantum state $S(\mathbf{x}) \ket{0} = \ket{\phi(\mathbf{x})}$ via gates of the form $e^{-i x_i G }$ for $i = 1,\dots, N$. Without loss of generality $G$ is assumed to be a $d \leq 2^n$-dimensional diagonal operator with spectrum $\lambda_1, \dots, \lambda_d$. Between such data-encoding gates, and before and after the entire encoding circuit, arbitrary unitary evolutions $W^{(1)}, \dots, W^{(N+1)}$ can be applied, so that 
\begin{equation}
    S(\mathbf{x}) = W^{(N+1)} e^{-i x_N G} W^{(N)} \dots W^{(2)} e^{-i x_1 G} W^{(1)}.
\end{equation}
The quantum kernel $\kappa(\mathbf{x}, \mathbf{x}')$ can  be written as
\begin{equation}\label{eq:kernel_fourier}
    \kappa(\mathbf{x}, \mathbf{x}') = \sum_{\mathbf{s}, \mathbf{t} \in \Omega} e^{i \mathbf{s} \mathbf{x}} e^{i \mathbf{t} \mathbf{x}'} c_{\mathbf{st}},  
\end{equation}
where $\Omega \subseteq \mathbb{R}^N$, and $c_{\mathbf{st}} \in \mathbb{C}$.
For every $\mathbf{s},\mathbf{t} \in \Omega$ we have $-\mathbf{s},-\mathbf{t} \in \Omega$ and $c_{\mathbf{st}} = c^*_{-\mathbf{s} -\mathbf{t}}$, which guarantees that the quantum kernel is real-valued. 
\end{theorem}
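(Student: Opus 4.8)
The plan is to compute the quantum kernel $\kappa(\mathbf{x},\mathbf{x}') = |\braket{\phi(\mathbf{x}')}{\phi(\mathbf{x})}|^2$ explicitly by tracking how each data-encoding gate $e^{-i x_i G}$ acts in the eigenbasis of $G$. First I would write $\ket{\phi(\mathbf{x})} = S(\mathbf{x})\ket{0}$ and insert the assumed form of $S(\mathbf{x})$. Because $G$ is diagonal with spectrum $\lambda_1,\dots,\lambda_d$, each factor $e^{-i x_i G}$ acts componentwise: in a fixed product basis indexed by tuples $\mathbf{j} = (j_1,\dots,j_N)$ with $j_i \in \{1,\dots,d\}$, expanding the product of unitaries $W^{(N+1)} e^{-i x_N G} W^{(N)} \cdots e^{-i x_1 G} W^{(1)}$ and collapsing the $W$ matrices into coefficients, one gets
\begin{equation}
    \ket{\phi(\mathbf{x})} = \sum_{\mathbf{j}} a_{\mathbf{j}}\, e^{-i (x_1 \lambda_{j_1} + \dots + x_N \lambda_{j_N})} \ket{v_{\mathbf{j}}},
\end{equation}
for suitable coefficients $a_{\mathbf{j}} \in \mathbb{C}$ and basis vectors $\ket{v_{\mathbf{j}}}$, i.e. $\ket{\phi(\mathbf{x})}$ is a finite sum of ``Fourier modes'' $e^{-i \boldsymbol{\lambda}_{\mathbf{j}} \cdot \mathbf{x}}$ where $\boldsymbol{\lambda}_{\mathbf{j}} = (\lambda_{j_1},\dots,\lambda_{j_N})$.

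Next I would form the overlap $\braket{\phi(\mathbf{x}')}{\phi(\mathbf{x})} = \sum_{\mathbf{j},\mathbf{k}} a_{\mathbf{j}}^* a_{\mathbf{k}}\, e^{i \boldsymbol{\lambda}_{\mathbf{j}}\cdot\mathbf{x}'} e^{-i\boldsymbol{\lambda}_{\mathbf{k}}\cdot\mathbf{x}} \braket{v_{\mathbf{j}}}{v_{\mathbf{k}}}$, which is again a finite sum $\sum_{\mathbf{u},\mathbf{w}} b_{\mathbf{u}\mathbf{w}} e^{i\mathbf{w}\cdot\mathbf{x}'} e^{-i\mathbf{u}\cdot\mathbf{x}}$ over frequency vectors $\mathbf{u},\mathbf{w}$ drawn from the difference set of the $\boldsymbol{\lambda}_{\mathbf{j}}$. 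Then taking the absolute square, $\kappa = \braket{\phi(\mathbf{x}')}{\phi(\mathbf{x})}\,\overline{\braket{\phi(\mathbf{x}')}{\phi(\mathbf{x})}}$, multiplies two such finite Fourier sums, which again yields a finite Fourier sum of exactly the claimed form $\sum_{\mathbf{s},\mathbf{t}\in\Omega} e^{i\mathbf{s}\mathbf{x}} e^{i\mathbf{t}\mathbf{x}'} c_{\mathbf{st}}$, where $\Omega$ is the set of frequency vectors that arise as differences $\boldsymbol{\lambda}_{\mathbf{j}} - \boldsymbol{\lambda}_{\mathbf{k}}$; collecting the coefficient bookkeeping gives the $c_{\mathbf{st}}$.

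For the reality and symmetry claim, I would observe that $\kappa(\mathbf{x},\mathbf{x}')$ is manifestly real (it is $|\cdot|^2$), so $\kappa = \bar\kappa$ forces $\sum_{\mathbf{s},\mathbf{t}} e^{i\mathbf{s}\mathbf{x}} e^{i\mathbf{t}\mathbf{x}'} c_{\mathbf{st}} = \sum_{\mathbf{s},\mathbf{t}} e^{-i\mathbf{s}\mathbf{x}} e^{-i\mathbf{t}\mathbf{x}'} c_{\mathbf{st}}^* = \sum_{\mathbf{s},\mathbf{t}} e^{i\mathbf{s}\mathbf{x}} e^{i\mathbf{t}\mathbf{x}'} c_{-\mathbf{s},-\mathbf{t}}^*$; since the exponentials $e^{i(\mathbf{s}\mathbf{x}+\mathbf{t}\mathbf{x}')}$ for distinct pairs $(\mathbf{s},\mathbf{t})$ are linearly independent as functions on $\mathbb{R}^N\times\mathbb{R}^N$, matching coefficients gives $c_{\mathbf{st}} = c_{-\mathbf{s},-\mathbf{t}}^*$, and in particular $-\mathbf{s},-\mathbf{t}\in\Omega$ whenever $\mathbf{s},\mathbf{t}\in\Omega$. (One could alternatively build this symmetry in by noting that the difference set of the spectrum is symmetric under negation.)

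The main obstacle I anticipate is the bookkeeping rather than anything conceptual: carefully organizing the multi-index expansion so that the $W^{(\ell)}$ matrices get absorbed into the coefficients without obscuring which frequency vectors survive, and being precise about the fact that the set $\Omega$ of surviving frequencies is a subset of $\{\boldsymbol{\lambda}_{\mathbf{j}} - \boldsymbol{\lambda}_{\mathbf{k}}\}$ (pairs with $b_{\mathbf{u}\mathbf{w}}=0$ drop out, so $\Omega$ may be strictly smaller). A secondary subtlety is justifying the linear-independence step used to extract the coefficient identity — this is standard (distinct characters on $\mathbb{R}^N$ are linearly independent), but it should be stated, and one must be slightly careful if the same exponential $e^{i(\mathbf{s}\mathbf{x}+\mathbf{t}\mathbf{x}')}$ can be produced by more than one index pair, in which case $c_{\mathbf{st}}$ is defined as the aggregated coefficient.
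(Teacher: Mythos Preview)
Your proposal is correct and follows essentially the same route as the paper: expand in the eigenbasis of $G$ via multi-indices, take the modulus squared, and regroup terms by the difference vectors $\boldsymbol{\lambda}_{\mathbf{j}} - \boldsymbol{\lambda}_{\mathbf{k}}$ to obtain the frequency set $\Omega$. The only cosmetic differences are that the paper computes the overlap $\braket{\phi(\mathbf{x}')}{\phi(\mathbf{x})}$ directly (immediately cancelling $(W^{(N+1)})^\dagger W^{(N+1)} = \mathbbm{1}$) rather than first expanding $\ket{\phi(\mathbf{x})}$, and that it does not spell out the coefficient symmetry $c_{\mathbf{st}} = c^*_{-\mathbf{s},-\mathbf{t}}$, which your linear-independence-of-characters argument supplies.
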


While the conditions of this theorem may sound restrictive at first, it includes a fairly general class of quantum models. The standard way to control a quantum system is to apply an evolution of Hamiltonian $G$ for time $t$, which is exactly described by the form $e^{-i t G}$. The time $t$ is associated with the input to the quantum computer (which may be the original input $x \in \mathcal{X}$ or the result of some pre-processing, in which case we can just redefine the dataset to be the pre-processed one). In short, most quantum kernels will be of the form shown in Eq.~(\ref{eq:kernel_fourier}). 

Importantly, for the class of Pauli generators, the kernel becomes a Fourier series:

\begin{corollary}[Fourier series representation of the quantum kernel]
For the setting described in Theorem~\ref{thm:kernel_fourier}, if the eigenvalue spectrum of $G$ is such that any difference $\lambda_i - \lambda_j$ for $i, j = 1, \dots, d$ is in $\mathbb{Z}$, then $\Omega$ becomes the set of $N$-dimensional integer-valued vectors $\mathbf{n} = (n_1, \dots, n_N)$, $n_1,\dots n_N \in \mathbb{Z}$. In this case the quantum kernel is a multi-dimensional Fourier series,
\begin{equation}\label{eq:kernel_fourier_int}
    \kappa(\mathbf{x}, \mathbf{x}') = \sum_{\mathbf{n}, \mathbf{n'} \in \Omega} e^{i \mathbf{n} \mathbf{x}} e^{i \mathbf{n'} \mathbf{x}'} c_{\mathbf{n, n'}},  
\end{equation}
\end{corollary}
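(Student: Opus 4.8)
The plan is to read the corollary off Theorem~\ref{thm:kernel_fourier} by tracking exactly what the frequency set $\Omega$ consists of. Recall from the proof deferred to Appendix~\ref{app:fourier_proof} that the amplitudes of the encoded Dirac vector $\ket{\phi(\mathbf{x})}$ are finite sums of terms of the form $e^{-i(\lambda_{j_1} x_1 + \dots + \lambda_{j_N} x_N)}$, where each $\lambda_{j_i}$ is one of the eigenvalues $\lambda_1,\dots,\lambda_d$ of the diagonal generator $G$ (the interleaved unitaries $W^{(1)},\dots,W^{(N+1)}$ only redistribute amplitude among basis states and hence introduce no new $\mathbf{x}$-dependence). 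Writing the quantum kernel as $\kappa(\mathbf{x},\mathbf{x}') = \braket{\phi(\mathbf{x}')}{\phi(\mathbf{x})}\braket{\phi(\mathbf{x})}{\phi(\mathbf{x}')}$, each surviving term is a product of one amplitude and one conjugate amplitude in $\mathbf{x}$, and likewise in $\mathbf{x}'$; the two $\mathbf{x}$-exponents combine into a factor $e^{i s_1 x_1}\cdots e^{i s_N x_N}$ with $s_i = \lambda_a - \lambda_b$ a difference of two eigenvalues of $G$, and similarly each $t_i = \lambda_{a'} - \lambda_{b'}$. Hence $\Omega \subseteq \{\lambda_a - \lambda_b : a,b \in \{1,\dots,d\}\}^N$.

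Given this inclusion the corollary is almost immediate. First I would invoke the hypothesis that $\lambda_i - \lambda_j \in \mathbb{Z}$ for all $i,j$, so that every component of every $\mathbf{s},\mathbf{t} \in \Omega$ is an integer; hence $\Omega \subseteq \mathbb{Z}^N$. Then I would rewrite the finite sum in Eq.~(\ref{eq:kernel_fourier}) as a sum over the full integer lattice by setting $c_{\mathbf{n},\mathbf{n}'} := c_{\mathbf{st}}$ when $\mathbf{n}=\mathbf{s}$ and $\mathbf{n}'=\mathbf{t}$ both lie in $\Omega$, and $c_{\mathbf{n},\mathbf{n}'} := 0$ otherwise. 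This yields exactly Eq.~(\ref{eq:kernel_fourier_int}), a finitely supported multi-dimensional Fourier series, and the reality/conjugation symmetry ($c_{\mathbf{n},\mathbf{n}'}=c^*_{-\mathbf{n},-\mathbf{n}'}$) carries over verbatim from Theorem~\ref{thm:kernel_fourier}. It is worth noting that the hypothesis is exactly what is needed and nothing more: since $\mathbb{Z}$ is closed under negation and addition, one needs only the \emph{differences} to be integers, not the $\lambda_i$ themselves --- as in the Pauli-$X$ example $G=\tfrac12\sigma_x$, where $\lambda_{1,2}=\pm\tfrac12$ but $\lambda_1-\lambda_2 = 1$.

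The one step that requires genuine care --- and the main obstacle --- is the first one: the \emph{statement} of Theorem~\ref{thm:kernel_fourier} only asserts $\Omega \subseteq \mathbb{R}^N$, so the corollary cannot be obtained from the theorem as a black box; it leans on the finer structural fact that the elements of $\Omega$ are assembled componentwise from pairwise eigenvalue differences of $G$. I would handle this by either (i) recording that structural fact explicitly as a by-product of the appendix proof and citing it here, or (ii) re-running the few relevant lines, emphasising that conjugation and the $W^{(k)}$ never create frequencies outside the set generated by the diagonal factors $e^{-i x_i G}$. Finally I would remark that $\Omega$ need not exhaust $\mathbb{Z}^N$ --- generically it is a strict finite subset --- which is precisely why the zero-padding in the second step is harmless.
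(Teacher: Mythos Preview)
Your proposal is correct and matches the paper's reasoning. The paper does not supply a separate proof of the corollary at all; it is stated immediately after Theorem~\ref{thm:kernel_fourier} as an obvious consequence, and the only relevant additional information is the sentence at the end of Appendix~\ref{app:fourier_proof} identifying $\Omega$ as the set of vectors $\Lambda_{\mathbf{j}}-\Lambda_{\mathbf{k}}$ built componentwise from eigenvalue differences --- exactly the structural fact you flag as the one nontrivial ingredient. Your observation that $\Omega$ is in fact a finite subset of $\mathbb{Z}^N$ (so the corollary's phrasing ``$\Omega$ becomes the set of $N$-dimensional integer-valued vectors'' is to be read up to zero-padding) is a useful clarification that the paper glosses over.
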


\begin{figure}
    \centering
    \includegraphics[width=0.75\textwidth]{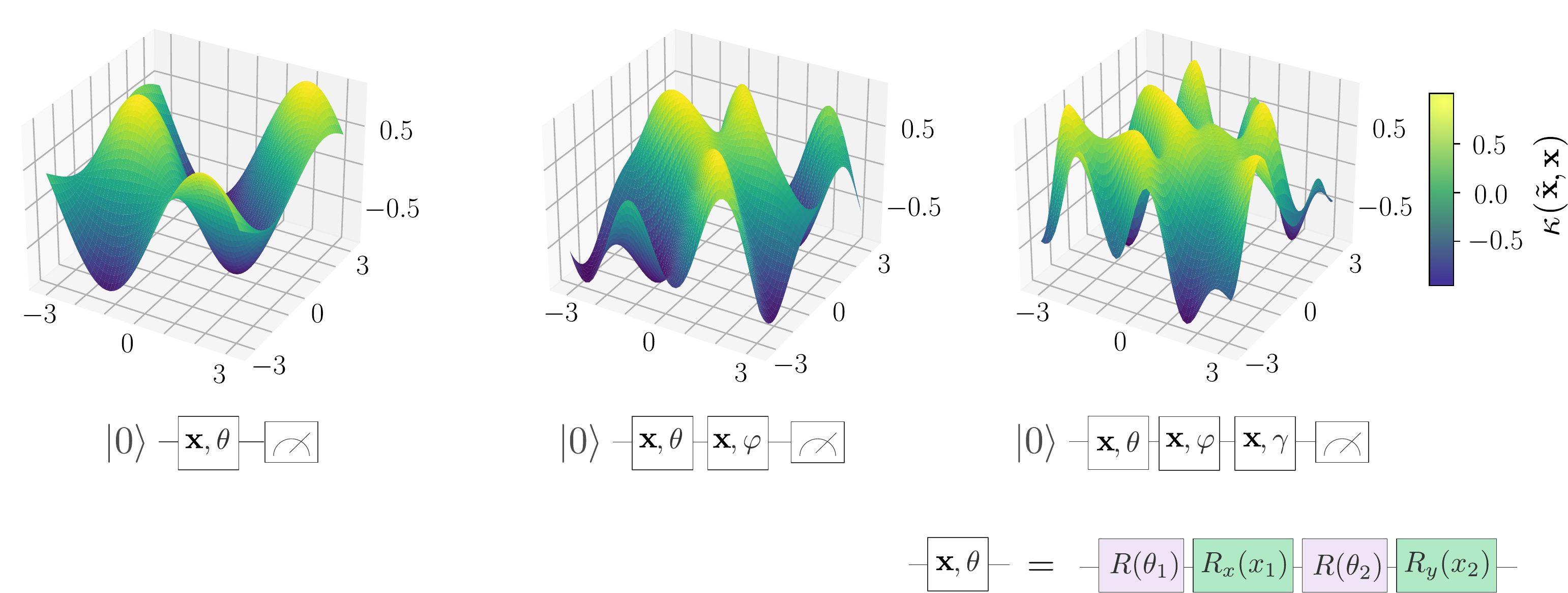}
    \caption{\textbf{Kernels generated by rotation embeddings.} Plots of the quantum kernel $\kappa(\tilde{\mathbf{x}}, \mathbf{x})$ with  $\tilde{\mathbf{x}} = (0, 0)$ using a very general data encoding strategy that repeats the input encoding into a single qubit one, two and three times. It is obvious that the repetition decreases the smoothness of the kernel by increasing the Fourier basis functions from which the kernel is inherently constructed. }
    \label{fig:kernels_nearterm}
\end{figure}

Expressions~(\ref{eq:kernel_fourier}) and ~(\ref{eq:kernel_fourier_int}) reveal a lot about the structure of quantum kernels, for example that they are not necessarily translation invariant, $\kappa(\mathbf{x}, \mathbf{x}') \neq g(\mathbf{x}-\mathbf{x}')$, unless the data-encoding strategy leads to $c_{\mathbf{st}} =\tilde{c}_{\mathbf{st}} \delta_{\mathbf{st}} = c_{\mathbf{s}}$ and
\begin{equation}
    \kappa(\mathbf{x}, \mathbf{x}') = \sum_{\mathbf{s} \in \Omega} e^{i \mathbf{s} (\mathbf{x}-\mathbf{x}')} \tilde{c}_{\mathbf{s}}.  
\end{equation}
Since $e^{-i x_i G} e^{i x'_i G}  = e^{-i (x_i - x'_i) G}$, this is true for all data embeddings that encode each original input into a separate physical subsystem, like rotation encoding introduced above. 

It is an interesting question if this link between data embedding and Fourier basis functions given to us by physics can help design particularly suitable kernels for applications, or be used to control smoothness properties of the kernel in a useful manner.

\section{Quantum models and reproducing kernel Hilbert spaces}\label{sec:models_rkhs}

I will now discuss the observation that quantum models are linear models in the feature space $\mathcal{F}$ of the data-encoding feature map. This automatically allows us to apply the results of kernel methods to quantum machine learning. A beautiful summary of these results can be found in \cite{steinwart2008support} and \cite{scholkopf2002learning}, which serve as a basis for many of the following insights.

\subsection{Quantum models are linear models in feature space}\label{ssec:linear}

First, let us define a quantum model. For this we need measurements.

\begin{quote}
    \textit{Measurements.}  In quantum computing, a measurement produces the observable result of a quantum circuit, and can therefore be seen as the final step of a quantum algorithm\footnote{An important exception is when the outcome of a measurement is used to influence the quantum circuit itself, but I do not consider those complications here.}. Mathematically speaking, a measurement corresponds to a Hermitian operator $\meas$ acting on vectors in the Hilbert space of the quantum system $\mathcal{H}$. Just like density matrices, measurement operators can be represented as elements of the space of $2^n \times 2^n$-dimensional complex matrices \cite{wolf}, and therefore live in a subspace of the data-encoding feature space $\mathcal{F}$. This will become quite crucial below. 
    
    A Hermitian operator can always be diagonalised and written as 
    \begin{equation}\meas = \sum_i \mu_i \ketbra{\mu_i}{\mu_i},\end{equation}
    where $\mu_i$ are the eigenvalues of $\meas$ and $\{\ket{\mu_i}\}$ is an orthonormal basis in the Hilbert space $\mathcal{H}$ of the quantum system. Note that $\ketbra{\mu_i}{\mu_i}$ is an outer product, and can be thought of as a (density) matrix. 
    
    The apparatus of quantum theory allows us to compute expected outcomes or \textit{expectations} of measurement results. Such expectations derive from expressing the quantum state in the eigenbasis of the measurement operator, $\ket{\psi} = \sum_i \braket{\mu_i}{\psi} \ket{\mu_i}$, and using the fact that $\meas \ket{\mu_i} = \mu_i \ket{\mu_i}$ and $\braket{\mu_i}{\mu_i}=1$: 
    \begin{equation} \label{eq:exp}
    \tr{\rho \meas } = \langle \psi | \meas | \psi \rangle = \sum_{i,j} \braket{\psi}{\mu_j} \braket{\mu_i}{\psi} \bra{\mu_j} \meas   \ket{\mu_i}  =  \sum_{i} |\braket{\psi}{\mu_i}|^2 \mu_i = \sum_{i}p(\mu_i) \mu_i.
    \end{equation}
    The above used the ``Born rule'', which states that the probability of measuring outcome $\mu_i$ is given by 
    \begin{equation} \label{eq:probs}
        p(\mu_i) = |\langle \mu_i | \psi \rangle|^2.
    \end{equation}
    It is clear that the right hand side of Eq.~(\ref{eq:exp}) is an expectation of a random variable in the classical sense of probability theory, but the probabilities themselves are computed by an unusual mathematical framework. Finally, it is good to know that the expectation of a measurement $\meas_{\varphi} = \ketbra{\varphi}{\varphi}$ (where $\ket{\varphi}$ is an arbitrary quantum state) gives us the \textit{overlap} of $\ket{\varphi}$ and $\ket{\psi}$, 
    \begin{equation} \label{eq:overlap}
        \tr{\rho \meas_{\varphi}} = \langle \psi | \meas_{\varphi} | \psi \rangle = |\langle \varphi | \psi \rangle|^2.
    \end{equation}
    Note that only because we can write down a measurement mathematically, we cannot necessarily implement it efficiently on a quantum computer. However, for measurements of type $\meas_{\varphi}$ there is a very efficient routine called the SWAP test to do so, if we can prepare the corresponding state efficiently. In practice, more complicated measurements are implemented by applying a circuit $W$ to the final quantum state, followed by a simple measurement (such as the well-known Pauli-Z measurement $\sigma_z$ that probes the state of qubits, which effectively implements $\meas = W^{\dagger} \sigma_z W$). 
    
    Of course, actual quantum computers can only ever produce an estimate of the above statistical properties, namely by repeating the entire computation $K$ times and computing the empirical probability/frequency or the empirical expectation $\frac{1}{K} \sum_{i=1}^K \mu_i$. However, repeating a fixed computation tens of thousands of times can be done in a fraction of a second on most hardware platforms, and only leads to a small constant overhead.
\end{quote}

We can define a quantum model as a measurement performed on a data-encoding state:
\begin{definition}[Quantum model]
    Let $\rho(x)$ be a quantum state that encodes classical data $x \in \mathcal{X}$ and $\meas$ a Hermitian operator representing a quantum measurement.
    A quantum model is the expectation of the quantum measurement as a function of the data input,
    \begin{equation}\label{eq:quantum_model}
        f(x) = \tr{ \rho(x) \meas }.
    \end{equation}
    The space of all quantum models contains functions $f: \mathcal{X} \rightarrow \mathbb{R}$.
    For pure-state embeddings with $\rho(x)= \ketbra{\phi(x)}{\phi(x)}$, this simplifies to
    \begin{equation}
    f(x) = \bra{\phi(x)} \meas \ket{\phi(x)}.
    \end{equation}
\end{definition}
As mentioned above, this definition is very general, but does not consider the important class of generative quantum models.

\begin{example}
   Getting back to the standard example of the Pauli-X rotation encoding, we can upgrade it to a full quantum model with parametrised measurement by applying an additional arbitrary rotation $R(\theta_1, \theta_2, \theta_3)$, which is parametrised by three trainable angles and is expressive enough to represent any single-qubit computation. After this, we measure in the Pauli-Z basis, yielding the overall quantum model:
   \begin{equation}
       f(x) = \tr{ \rho(x)  \meas(\theta_1, \theta_2, \theta_3)}=  \bra{\phi(x)} \meas(\theta_1, \theta_2, \theta_3)\ket{\phi(x)},
   \end{equation}
   with measurement $\meas(\theta_1, \theta_2, \theta_3) = R^{\dagger}(\theta_1, \theta_2, \theta_3) \sigma_z  R(\theta_1, \theta_2, \theta_3)$,
\begin{equation}
 	R(\theta_1, \theta_2, \theta_3) = \begin{pmatrix}
    e^{i(-\frac{\theta_1}{2}-\frac{\theta_3}{2})}\cos(\frac{\theta_2}{2}) & -e^{i(-\frac{\theta_1}{2}+\frac{\theta_3}{2})}\sin(\frac{\theta_2}{2})\\
    e^{i(\frac{\theta_1}{2}-\frac{\theta_3}{2})}\sin(\frac{\theta_2}{2}) &
    e^{i(\frac{\theta_1}{2}+\frac{\theta_3}{2})}\cos(\frac{\theta_2}{2})
    \end{pmatrix}
\end{equation}   
    and $\ket{\phi(x)} =  R_x(x) \ket{0}$.  One can use a computer-algebra system (or, for the patient among us, lengthy calculations) to verify that the quantum model is equivalent to the function
   \begin{equation}\label{eq:example_model}
   f(x) =  \cos (\theta_2) \cos (x) - \sin(\theta_1) \sin (\theta_2) \sin (x),
   \end{equation}
   and hence independent of the third parameter. 
\end{example}

Next, let us define what a linear (machine learning) model in feature space is:

\begin{definition}[Linear model]
    Let $\mathcal{X}$ be a data domain and $\phi:\mathcal{X} \to \mathcal{F}$ a feature map. We call any function
    \begin{equation}
        f(x) = \langle \phi(x), w \rangle_{\mathcal{F}},
    \end{equation}
    with $w \in \mathcal{F}$ a linear model in $\mathcal{F}$.
\end{definition}

From these two definitions we immediately see that:
\begin{theorem}[Quantum models are linear models in data-encoding feature space] \label{thm:quantum_linear}
    Let $f(x) = \tr{\rho \meas}$ be a quantum model with feature map $\phi: x \in \mathcal{X} \to \rho(x) \in \mathcal{F}$ and data domain $\mathcal{X}$. The quantum model $f$ is a linear model in $\mathcal{F}$.
\end{theorem}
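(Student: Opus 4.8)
The plan is to prove the statement by directly unfolding the two definitions and exhibiting the weight vector $w \in \mathcal{F}$ that realises the quantum model as a linear model. The natural candidate is simply $w = \meas$ itself.

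First I would check that $\meas$ is a legitimate element of $\mathcal{F}$. This is already granted by the discussion preceding the theorem: a measurement is a Hermitian operator on the $2^n$-dimensional Hilbert space $\mathcal{H}$, and such operators are in particular $2^n \times 2^n$ complex matrices, hence members of $\mathcal{F}$ (they occupy the subspace of Hermitian matrices). So $w = \meas \in \mathcal{F}$ is well defined. Then, the only computation, I would verify that $\langle \phi(x), w \rangle_{\mathcal{F}} = f(x)$. By Definition~\ref{def:data_enc_feat_map} the inner product on $\mathcal{F}$ is the Hilbert--Schmidt inner product $\langle \rho, \sigma \rangle_{\mathcal{F}} = \tr{\rho^{\dagger}\sigma}$, so
\[
\langle \phi(x), \meas \rangle_{\mathcal{F}} = \tr{\rho(x)^{\dagger}\meas}.
\]
The key observation is that $\rho(x) = \ketbra{\phi(x)}{\phi(x)}$ is Hermitian, so $\rho(x)^{\dagger} = \rho(x)$ and the dagger in the Hilbert--Schmidt product is vacuous; hence $\tr{\rho(x)^{\dagger}\meas} = \tr{\rho(x)\meas} = f(x)$. (For mixed states $\rho(x) = \sum_k p_k \ketbra{\psi_k}{\psi_k}$ the same holds, as each term is Hermitian.) One should also note that the value is real --- $\tr{\rho(x)\meas} = \bra{\phi(x)}\meas\ket{\phi(x)}$ with $\meas$ Hermitian --- so $f$ indeed maps $\mathcal{X} \to \mathbb{R}$, consistent with the notion of linear model used here.

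There is no real obstacle: the entire content of the theorem sits in the earlier choice to define $\mathcal{F}$ as the space of complex matrices with the Hilbert--Schmidt inner product (rather than taking $\ket{\phi(x)} \in \mathcal{H}$ as the feature vector), because then both the feature vectors $\rho(x)$ and the weight $\meas$ live in the \emph{same} Hilbert space and the model becomes manifestly linear. The only two points that deserve a word of care are (i) that $\meas$ really does sit inside $\mathcal{F}$, and (ii) that the conjugate-linear slot of the Hilbert--Schmidt product does not spoil the identification --- which is precisely where Hermiticity of $\rho(x)$ enters. If desired, one could add the complementary remark that the correspondence is onto all linear models with Hermitian weight, and that a general (non-Hermitian) $w$ adds nothing, since $\tr{\rho(x)^{\dagger} w}$ only ever probes the Hermitian part of $w$ against the Hermitian operator $\rho(x)$; but that refinement is beyond what the statement asks.
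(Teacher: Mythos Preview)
Your proposal is correct and is precisely the argument the paper has in mind: the paper gives no written proof at all, merely stating ``From these two definitions we immediately see that'' before the theorem, so the entire content is the identification $w=\meas\in\mathcal{F}$ together with $\langle\rho(x),\meas\rangle_{\mathcal{F}}=\tr{\rho(x)^{\dagger}\meas}=\tr{\rho(x)\meas}=f(x)$. You have simply spelled out the two points (membership of $\meas$ in $\mathcal{F}$ and Hermiticity of $\rho(x)$ killing the dagger) that the paper leaves implicit.
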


It is interesting to note that the measurement $\meas$ can always be expressed as a linear combination $\sum_k \gamma_k \rho(x^k)$ of data-encoding states $\rho(x^k)$ where $x^k \in \mathcal{X}$. 

\begin{theorem}[Quantum measurements are linear combinations of data-encoding states] \label{thm:measurement_expansion_general}
    Let $f_{\meas}(x) = \tr{\rho \meas}$ be a quantum model. There exists a measurement $ \meas_{\rm exp} \in \mathcal{F}$ of the form
    \begin{equation} \meas_{\rm exp} = \sum_k \gamma_k \rho(x^k)\end{equation}
    with $ x^k \in \mathcal{X}$, such that $f_{\meas}(x) = f_{\meas_{\rm exp}}(x)$ for all $x \in \mathcal{X}$. 
\end{theorem}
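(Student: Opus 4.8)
The plan is to exploit the fact that the quantum model $f_{\meas}(x) = \tr{\rho(x)\meas}$ only probes $\meas$ through its Hilbert--Schmidt inner products with the data-encoding states, so any component of $\meas$ orthogonal to the span of those states is invisible to the model. First I would set $\mathcal{S} = \mathrm{span}_{\mathbb{C}}\{\rho(x) : x \in \mathcal{X}\} \subseteq \mathcal{F}$, and note that since $\mathcal{F}$ is the finite-dimensional Hilbert space of $2^n\times 2^n$ complex matrices, $\mathcal{S}$ is automatically a closed subspace with $\dim\mathcal{S} \le 2^{2n}$. I would then write the orthogonal decomposition $\meas = \meas_{\parallel} + \meas_{\perp}$ with $\meas_{\parallel} \in \mathcal{S}$ and $\meas_{\perp} \in \mathcal{S}^{\perp}$, and define $\meas_{\rm exp} := \meas_{\parallel}$. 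Because each $\rho(x)$ is Hermitian we have $\tr{\rho(x)\meas} = \langle \rho(x), \meas\rangle_{\mathcal{F}}$, and since $\rho(x) \in \mathcal{S}$ the orthogonal part drops out, giving $f_{\meas}(x) = \langle \rho(x), \meas_{\parallel}\rangle_{\mathcal{F}} = \tr{\rho(x)\meas_{\rm exp}} = f_{\meas_{\rm exp}}(x)$ for every $x \in \mathcal{X}$. Finally, since $\mathcal{S}$ is finite-dimensional and spanned by the $\rho(x)$, the element $\meas_{\rm exp}$ is a finite linear combination $\sum_k \gamma_k \rho(x^k)$ with $x^k \in \mathcal{X}$, as claimed.

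The one point requiring a genuine (if short) argument is that $\meas_{\rm exp} = \meas_{\parallel}$ is itself Hermitian, i.e. a legitimate measurement operator in $\mathcal{F}$: a priori the complex span of Hermitian matrices contains non-Hermitian elements, so this is not immediate. I would handle it via the conjugate-linear adjoint map $A \mapsto A^{\dagger}$, which is a Hilbert--Schmidt isometry of $\mathcal{F}$. It fixes every $\rho(x)$, hence maps $\mathcal{S}$ into $\mathcal{S}$; and a brief trace computation (using $\tr{\rho(x)B^{\dagger}} = \overline{\tr{\rho(x)B}}$ together with Hermiticity of $\rho(x)$) shows it also maps $\mathcal{S}^{\perp}$ into $\mathcal{S}^{\perp}$. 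Applying $\dagger$ to $\meas = \meas_{\parallel} + \meas_{\perp}$, using $\meas^{\dagger} = \meas$, and invoking uniqueness of the orthogonal decomposition forces $\meas_{\parallel}^{\dagger} = \meas_{\parallel}$.

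I expect this Hermiticity-preservation step to be the main (indeed essentially the only) obstacle; everything else is the standard observation that a linear functional restricted to a subspace sees only the orthogonal projection, combined with finite-dimensionality of $\mathcal{F}$. It is worth remarking that the same argument yields the quantitative statement from the synopsis: $\meas_{\rm exp}$ can always be chosen with at most $\dim\mathcal{S} \le 2^{2n}$ terms, and when one later restricts attention to models that minimise a data-dependent cost, the relevant subspace shrinks further to the span of the $M$ training states, leaving only $M$ degrees of freedom $\{\gamma_k\}$.
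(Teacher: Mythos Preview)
Your proposal is correct and follows the same strategy as the paper: decompose $\meas$ orthogonally with respect to the span of the data-encoding states and observe that the orthogonal complement is invisible to $f_{\meas}$. Your treatment is in fact more careful than the paper's, which simply splits $\meas$ into ``the part that lies in the image of $\mathcal{X}$'' plus a remainder without verifying that the projected piece is Hermitian; your adjoint-map argument closes that gap cleanly (and, as a bonus, the same reasoning shows the coefficients $\gamma_k$ can be taken real, since a Hermitian element of $\mathrm{span}_{\mathbb{C}}\{\rho(x^k)\}$ equals its own adjoint and hence $\sum_k \gamma_k\rho(x^k) = \sum_k \mathrm{Re}(\gamma_k)\rho(x^k)$).
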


\begin{proof}
    We can divide $\meas$ into the part that lies in the image of $\mathcal{X}$ and the remainder $R$,
    \begin{equation}
        \meas = \meas_{\rm exp} + R.
    \end{equation}
    Since the trace is linear, we have:
    \begin{equation}
        \tr{\rho(x) \meas} = \tr{\rho(x)  \meas_{\rm exp}} + \tr{\rho(x) R}.
    \end{equation}
    The data-encoding state $\rho(x)$ only has contributions in $\mathcal{F}$, which means that the inner product $\tr{\rho(x) R}$ is always zero.
\end{proof}
Below we will see that optimal measurements with respect to typical machine learning cost functions can be expanded in the training data only.

Note that the fact that a quantum model can be expressed as a linear model in the feature space does \textit{not} mean that it is linear in the Hilbert space of the Dirac vectors $\ket{\phi(x)}$, nor is it linear in the data input $x$. As mentioned before, in the context of variational circuits the measurement usually depends on trainable parameters, which is realised by applying a parametrised quantum operation or circuit that ``rotates'' the basis of a fixed measurement. Variational quantum models are also not necessarily linear in their actual trainable parameters.

As a last comment for readers that prefer the vectorised version of the data-encoding feature map, by writing the measurement operator $\meas = \sum_i \mu_i \ketbra{\mu_i}{\mu_i}$ in its eigenbasis, we can likewise write a quantum model as the inner product of a vectorised feature vector $\ket{\phi(x)} \otimes \ket{\phi^*(x)} \in \mathcal{F}_v$ with some other vector $\sum_i \mu_i \ket{\mu_i} \otimes \ket{\mu_i} \in \mathcal{F}_v$. 
\begin{align} \label{eq:raw_model_general} 
    f(x) &= \bra{\phi(x)} \meas \ket{\phi(x)}\\
    &= \sum_i \mu_i |\braket{\mu_i}{\phi(x)}|^2\\
    &= \Big( \bra{\phi(x)} \otimes \bra{\phi^*(x)}\Big)  \Big( \sum_i \mu_i \ket{\mu_i} \otimes \ket{\mu^*_i}  \Big), \label{eq:raw_model} 
\end{align}
or using the vectorised density matrix notation introduced above,
\begin{equation}\label{eq:vectorised_linear_model}
    f(x) = \braakett{\rho(x)}{w},
\end{equation}
with $w = \sum_i \mu_i \kett{\rho_i} $.

\subsection{The RKHS of the quantum kernel and the space of quantum models are equivalent}\label{ssec:rkhs_quantum_models}

So far we were dealing with two different kinds of Hilbert spaces: The Hilbert space $\mathcal{H}$ of the quantum system, and the feature space $\mathcal{F}$ that contains the embedded data. I will now construct yet another feature space for the quantum kernel, but one derived directly from the kernel and with no further notion of a quantum model. This time the feature space is a Hilbert space $F$ of \textit{functions}, and due to its special construction it is called the \textit{reproducing kernel Hilbert space} (RKHS). The relevance of this feature space is that the functions it contains turn out to be exactly the quantum model functions $f$ (which is a bit surprising at first: this feature space contains linear models defined in an equivalent feature space!). 

The RKHS $F$ of the quantum kernel can be defined as follows (as per Moore-
Aronsajn's construction\footnote{See also \url{http://www.stats.ox.ac.uk/~sejdinov/teaching/atml14/Theory_2014.pdf} for a great overview.}):
\begin{definition}[Reproducing kernel Hilbert space]\label{def:rkhs}
Let $\mathcal{X} \neq \emptyset$. The reproducing kernel Hilbert space of a kernel $\kappa$ over $\mathcal{X}$ is the Hilbert space $F$ created by completing the span of functions $f: \mathcal{X} \rightarrow \mathbb{R}$, $f(\cdot) = \kappa(x, \cdot)$, $x \in \mathcal{X}$ (i.e., including the limits of Cauchy series). For two functions $f(\cdot) = \sum_i \alpha_i \kappa(x^i, \cdot)$, $g(\cdot)=\sum_j \beta_j \kappa(x^j, \cdot) \in F$, the inner product is defined as
\begin{equation}\langle f, g\rangle_{F} =  \sum_{ij} \alpha_i \beta_j \kappa(x^i, x^j),  \end{equation}
with $ \alpha_i, \beta_j \in \mathbb{R}$.
\end{definition}
Note that according to Theorem~\ref{thm:kernel_fourier} the ``size'' of the space of common quantum models, and likewise the RKHS of the quantum kernel, are fundamentally limited by the generators of the data-encoding gates. If we consider $\kappa$ as the quantum kernel, the definition of the inner product reveals with
\begin{equation}
    \langle \kappa(x, \cdot), \kappa(x', \cdot) \rangle_{F} =  \kappa(x, x'),
\end{equation}
that $x \rightarrow \kappa(x, \cdot)$ is a feature map of this kernel (but one mapping data to \textit{functions} instead of matrices, which feels a bit odd at first). In this sense, $F$ can be regarded as an alternative feature space to $\mathcal{F}$. The name of this unique feature space comes from the \textit{reproducing property} 
\begin{equation}
    \langle f, \kappa(x, \cdot) \rangle_{F} = f(x) \text{ for all } f \in F,
\end{equation}
which also shows that the kernel is the evaluation functional $\delta_x$ which assigns $f$ to $f(x)$. An alternative definition of the RKHS is the space in which the evaluation functional is bounded, which gives the space a lot of favourable properties from a mathematical perspective.

To most of us, the definition of an RKHS is terribly opaque when first encountered, so a few words of explanation may help (see also Figure~\ref{fig:rkhs}). One can think of the RKHS as a space whose elementary functions $\kappa(x, \cdot)$ assign a distance measure to every data point. Functions of this form were also plotted in Figure~\ref{fig:kernels_standard} and \ref{fig:kernels_nearterm}. By feeding another data point $x'$ into this ``similarity measure'', we get the distance between the two points. As a vector space, $F$ also contains linear combinations of these building blocks. The functions living in $F$ are therefore linear combinations of data similarities, just like for example kernel density estimation constructs a smooth function by adding Gaussians centered in the data. The kernel then regulates the ``resolution'' of the distance measure, for example by changing the variance of the Gaussian. 

\begin{figure*}[t]
    \centering
    \includegraphics[width=0.4\textwidth]{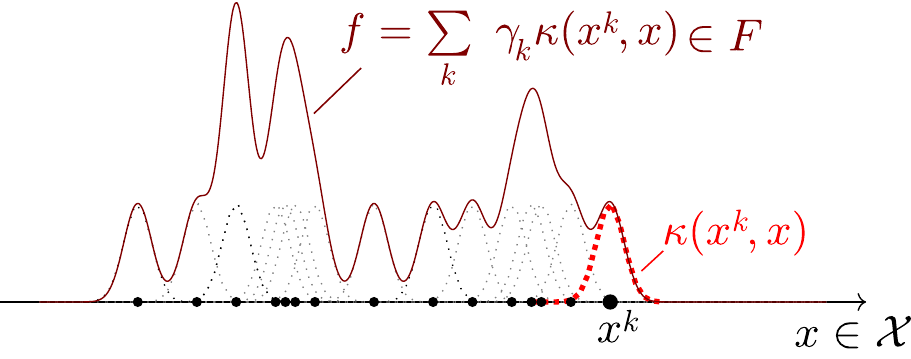}
    \caption{\textbf{Intuition for the functions living in the reproducing kernel Hilbert space (RKHS).} The RKHS $F$ contains functions that are linear combinations of kernel functions where one ``slot'' is fixed in a possible data sample $x^k \in \mathcal{X}$. This illustration of one such function $f \in F$, using a Gaussian kernel, shows how the kernel regulates the ``smoothness'' of the functions in $F$, as a wider kernel will simplify $f$. Since the RKHS is equivalent to the space of linear models that it has been derived from, the kernel fundamentally defines the class of functions that the linear model can express. 
    }
    \label{fig:rkhs}
\end{figure*}

Once one gets used to this definition, it is immediately apparent that the functions living in the RKHS of the quantum kernel are what we defined as quantum models:
\begin{theorem}
Functions in the RKHS $F$ of the quantum kernel are linear models in the data-encoding feature space $\mathcal{F}$ and vice versa.
\end{theorem}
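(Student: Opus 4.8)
The plan is to prove the two inclusions separately, using the two feature-map descriptions of the same quantum kernel $\kappa(x,x')=\tr{\rho(x')\rho(x)}$ as a bridge: the ``density-matrix'' feature map $\phi:x\to\rho(x)\in\mathcal{F}$ (Definition~\ref{def:data_enc_feat_map}) and the RKHS feature map $x\to\kappa(x,\cdot)\in F$ (Definition~\ref{def:rkhs}). Since both feature maps reproduce the same kernel, every linear model in one feature space should correspond to a linear model in the other, and these linear models in $\mathcal{F}$ are exactly the quantum models by Theorem~\ref{thm:quantum_linear}. So the theorem really amounts to the standard fact that the RKHS of a kernel coincides with the set of linear functionals in \emph{any} feature space of that kernel, specialised to the quantum case.

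First I would show that every function in $F$ is a quantum model. Take $f\in F$; by definition $f$ is a limit of functions of the form $\sum_i \alpha_i \kappa(x^i,\cdot)$ with $\alpha_i\in\mathbb{R}$, $x^i\in\mathcal{X}$. For such a finite sum, write
\begin{equation}
f(x)=\sum_i \alpha_i \kappa(x^i,x)=\sum_i \alpha_i \tr{\rho(x^i)\rho(x)}=\tr{\Big(\sum_i \alpha_i \rho(x^i)\Big)\rho(x)},
\end{equation}
so $f(x)=\tr{\rho(x)\meas}$ with $\meas=\sum_i \alpha_i\rho(x^i)$, which is Hermitian (real coefficients, Hermitian $\rho(x^i)$) and lies in $\mathcal{F}$; hence $f$ is a quantum model. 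For the closure, I would invoke the reproducing property: if $f_n\to f$ in the RKHS norm then $f_n(x)=\langle f_n,\kappa(x,\cdot)\rangle_F\to\langle f,\kappa(x,\cdot)\rangle_F=f(x)$ pointwise, and on the quantum side the corresponding operators $\meas_n$ form a Cauchy sequence in $\mathcal{F}$ (because $\|\meas_n-\meas_m\|_{\mathcal{F}}^2=\langle f_n-f_m,f_n-f_m\rangle_F$ by the isometry between the span and its image in $\mathcal{F}$), so they converge to some $\meas\in\mathcal{F}$ with $f(x)=\tr{\rho(x)\meas}$. Thus $F$ is contained in the space of quantum models, which by Theorem~\ref{thm:quantum_linear} is the space of linear models in $\mathcal{F}$.

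Conversely, I would show every quantum model lies in $F$. Let $f(x)=\tr{\rho(x)\meas}$ with $\meas\in\mathcal{F}$. By Theorem~\ref{thm:measurement_expansion_general} we may replace $\meas$ by $\meas_{\rm exp}=\sum_k\gamma_k\rho(x^k)$ (a finite linear combination of data-encoding states) without changing $f$ on $\mathcal{X}$; in the pure-state case the $\gamma_k$ can be taken real after symmetrising, since $f$ is real-valued. Then $f(x)=\sum_k\gamma_k\tr{\rho(x^k)\rho(x)}=\sum_k\gamma_k\kappa(x^k,x)$, which is manifestly an element of the pre-completion span defining $F$. Combining the two inclusions with Theorem~\ref{thm:quantum_linear} gives the claimed equivalence.

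The main obstacle is the closure/limit argument in the first direction: one must be careful that the natural map sending the finite combination $\sum_i\alpha_i\kappa(x^i,\cdot)\in F$ to $\sum_i\alpha_i\rho(x^i)\in\mathcal{F}$ is well-defined and isometric onto its image (equal inner products, but possibly different representations of the same function may map to different operators — this is resolved exactly because $\langle\cdot,\cdot\rangle_F$ is computed via the kernel, which equals the $\mathcal{F}$ inner product of the images), and that the $\mathcal{F}$-limit operator is still Hermitian and still reproduces $f$ pointwise. A secondary subtlety is that the RKHS completion may add genuine limit functions, so one should not claim every quantum model arises from a \emph{finite} kernel expansion over arbitrary points — though Theorem~\ref{thm:measurement_expansion_general} conveniently shows finite expansions already suffice on the quantum side, so the two spaces match exactly. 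I expect the cleanest writeup simply cites the Moore--Aronszajn correspondence between an RKHS and linear functionals on any feature space, and then plugs in $\mathcal{F}$ and Theorem~\ref{thm:quantum_linear}.
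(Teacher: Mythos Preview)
Your proposal is correct and follows essentially the same route as the paper: the forward direction rewrites a finite RKHS expansion $\sum_i\alpha_i\kappa(x^i,\cdot)$ as $\tr{\rho(x)\meas}$ with $\meas=\sum_i\alpha_i\rho(x^i)$, and the converse direction invokes Theorem~\ref{thm:measurement_expansion_general} to expand an arbitrary measurement in data-encoding states. Your treatment is in fact more careful than the paper's, which silently works only with finite kernel expansions and does not discuss the completion or the isometry argument you outline.
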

\begin{proof}
The functions in the RKHS of the quantum kernel are of the form $f(\cdot) = \sum_k \gamma_k \kappa(x^k, \cdot)$, with $x^k \in \mathcal{X}$. We get
\begin{align}
    f(x) &= \sum_k \gamma_k \kappa(x^k, x) \\
    &= \sum_k \gamma_k \tr{\rho(x^k) \rho(x) }\\
    &= \tr{\sum_k \gamma_k \rho(x^k) \rho(x) }\\
    &= \tr{\meas \rho(x) }.
\end{align}
Using Theorem~\ref{thm:measurement_expansion_general} we know that all quantum models can be expressed by measurements $\sum_k \gamma_k \rho(x^k)$, and hence by functions in the RKHS.
\end{proof}
In fact, the above observation applies to \textit{any} linear model in a feature space that gives rise to the quantum kernel (see  Theorem 4.21 in \citep{steinwart2008support}).


As a first taste of how the connection of quantum models and kernel theory can be exploited for quantum machine learning, consider the question whether quantum models are universal function approximators. If quantum models are universal, the RKHS of the quantum kernel must be universal (or dense in the space of functions we are interested in). This leads to the definition of a universal kernel (see \cite{steinwart2008support} Definition 4.52):
\begin{definition}[Universal kernel]
A continuous kernel $\kappa$ on a compact metric space $\mathcal{X}$ is called
universal if the RKHS $F$ of $\kappa$ is dense in $C(\mathcal{X})$, i.e., for every function
$g$ in the set of functions $C(\mathcal{X})$ mapping from elements in $\mathcal{X}$ to a scalar value, and for all $\epsilon > 0$ there exists an $f \in F$ such that
\begin{equation}  \Vert f - g  \Vert_{\infty} \leq \epsilon.\end{equation}
\end{definition} 

The reason why this is useful is that there are a handful of known necessary conditions for a kernel to be universal, for example if its feature map is injective (see \cite{steinwart2008support} for more details). This immediately excludes quantum models defined on the data domain $\mathcal{X} = \mathbb{R}$ which use single-qubit Pauli rotation gates of the form $e^{ix \sigma}$ (with $\sigma$ a Pauli matrix) to encode data: since such rotations are $2\pi$-periodic, two different $x, x' \in \mathcal{X}$ get mapped to the same data-encoding state $\rho(x)$. In other words, and to some extent trivially so, on a data domain that extends beyond the periodicity of a quantum model we never have a chance for universal function approximation. Another example for universal kernels are kernels of the form $\kappa(x, x') = \sum_{k=1}^{\infty} c_k \langle x', x\rangle^k$ (see \cite{steinwart2008support} Corollary 4.57). Vice versa, the universality proof for a type of quantum model in \cite{schuld2020effect} suggests that some quantum kernels of the form~(\ref{thm:kernel_fourier}) are universal in the asymptotic limit of exponentially large circuits.

I want to finish with a final note about the relation between ``wavefunctions'' and functions in the RKHS of quantum systems (see also the appendix of \cite{schuld2019quantum}). Quantum states are sometimes called ``wavefunctions'', since an alternative definition of the Hilbert space of a quantum system is the space of functions $f(\cdot) = \psi(\cdot)$ which map a measurement outcome $i$ corresponding to basis state $\ket{i}$ to an ``amplitude'' $\psi(i) = \braket{i}{\psi}$. (The dual basis vector $\bra{i}$ can here be understood as the evaluating functional $\delta_{i}$ which returns this amplitude.) Hence, the Hilbert space of a quantum system can be written as a space of functions mapping from $\{i\} \rightarrow \mathbbm{C}$. But the functions that we are interested in for machine learning are functions \textit{in the data}, not in the possible measurement outcomes. This means that the Hilbert space of the quantum system is only equivalent to the RKHS of a quantum machine learning model if we associate data with the measurement outcomes. This is true for many proposals of generative quantum machine learning models \cite{benedetti2019generative, cheng2018information}, and it would be interesting to transfer the results to this setting.

\section{Training quantum models}\label{sec:optimisation}

While the question of universality addresses the expressivity of quantum models, the remaining sections will look at questions of trainability and optimisation, for which the kernel perspective has the most important results to offer. Notably, we will see that the optimal measurements of quantum models for typical machine learning cost functions only have relatively few degrees of freedom. Similarly, the process of finding these optimal models (i.e., training over the space of all possible quantum models) can be formulated as a low-dimensional optimisation problem. Most of the results are based on the fact that for kernel methods, the task of training a model is equivalent to optimising over the model's corresponding RKHS.

\subsection{Optimising quantum models is equivalent to optimising over the RKHS}\label{ssec:optimsing_rkhs}

In machine learning we want to find optimal models, or those that minimise the cost functions derived from learning problems. This process is called \textit{training}. From a learning theory perspective, training can be phrased as \textit{regularised empirical risk minimisation}, and the problem of training quantum models can be cast as follows:

\begin{definition}[Regularised empirical risk minimisation of quantum models]
Let $\mathcal{X}, \mathcal{Y}$ be data input and output domains, $p$ a probability distribution on $\mathcal{X}$ from which data is drawn, and $L: \mathcal{X} \times \mathcal{Y} \times \mathbb{R} \rightarrow [0, \infty) $ a loss function that quantifies the quality of the prediction of a quantum model $f(x)= \tr{\rho(x) \meas}$. Let 
\begin{equation}
\mathcal{R}_L(f) = \int_{\mathcal{X} \times \mathcal{Y}} L(x, y, f(x)) \,\mathrm{d}p(x, y) 
\end{equation}
be the expected loss (or ``\textit{risk}'') of $f$ under $L$, where $L$ may depend explicitly on $x$. Since $p$ is unknown, we approximate the risk by the empirical risk  
\begin{equation}\hat{\mathcal{R}}_L(f) = \frac{1}{M} \sum_{m=1}^M L(x^m, y, f(x^m)).\end{equation}
Regularised empirical risk minimisation of quantum models is the problem of minimising the empirical risk over all possible quantum models while also minimising the norm of the measurement $\meas$, 
\begin{equation}\label{eq:erm}
    \inf_{\meas \in \mathcal{F}} \lambda \| \meas \|^2_{\mathcal{F}} + \hat{\mathcal{R}}_L(\tr{\rho(x) \meas}),
\end{equation}
where $\lambda \in \mathbb{R}^+$ is a positive hyperparameter that controls the strength of the regularisation term.
\end{definition}

We saw in Section~\ref{sec:models_rkhs} that quantum models are equivalent to functions in the RKHS of the quantum kernel, which allows us to replace the term $\hat{\mathcal{R}}_L(\tr{ \rho(x) \meas})$ in the empirical risk by $\hat{\mathcal{R}}_L(f)$, $f \in F$.

But what about the regularisation term? Since with Theorem~\ref{thm:measurement_expansion_general} we can write
\begin{align}
 \| \meas \|^2_{\mathcal{F}}  &= \tr{\meas^2} \\
 &= \sum_{ij} \gamma_i \gamma_j \tr{\rho(x^i)\rho(x^j)}\\
&= \sum_{ij} \gamma_i \gamma_j \kappa(x^i, x^j) \\
&=  \langle \sum_{i} \gamma_i \kappa(x^i, \cdot), \sum_{i} \gamma_i \kappa(x^i, \cdot) \rangle_{F}\\
&=\langle f, f \rangle_{F},
\end{align}
the norm of $\meas \in \mathcal{F}$ is equivalent to the norm of a corresponding $f \in {F}$. Hence, the regularised empirical risk minimisation problem in Eq.~(\ref{eq:erm}) is equivalent to  
\begin{equation}\label{eq:erm2}
    \inf_{f \in F } \gamma  \Vert f \Vert^2_{F} + \hat{\mathcal{R}}_L(f),
\end{equation}
which minimises the regularised risk over the RKHS of the quantum kernel. We will see in the remaining sections that this allows us to characterise the problem of training and its solutions to a surprising degree.

\subsection{The measurements of optimal quantum models are expansions in the training data}\label{ssec:meas_expansion}

The \textit{representer theorem}, one of the main achievements of classical kernel theory, prescribes that the function $f$ from the RKHS which minimises the regularised empirical risk can always be expressed as a weighted sum of the kernel between $x$ and the training data. Together with the connection between quantum models and the RKHS of the quantum kernel, this fact will allow us to write optimal quantum machine learning models in terms of the quantum kernel.

More precisely, the representer theorem can be stated as follows (for a more general version, see \cite{scholkopf2002learning}, Theorem 5.1):
\begin{theorem}[Representer theorem]\label{thm:representer_quantum}
Let $\mathcal{X}, \mathcal{Y}$ be an input and output domain, $\kappa : \mathcal{X} \times \mathcal{X} \to \mathbb{R}$ a kernel with a corresponding reproducing kernel Hilbert space $F$, and given training data $\mathcal{D} = \{(x^1, y^1), \dotsc, (x^M, y^M) \in \mathcal{X} \times \mathcal{Y}\}$. Consider a strictly monotonic increasing regularisation function $g \colon [0, \infty) \to \mathbb{R}$, and an arbitrary loss $L \colon \mathcal{X} \times \mathcal{Y}\times \mathbb{R} \to \mathbb{R}\cup \lbrace \infty \rbrace$. Any minimiser of the regularised empirical risk
\begin{equation}
 f_{\rm opt} = \underset{f \in F}{\mathrm{argmin}} \left\lbrace \hat{\mathcal{R}}_L(f) + g\left( \lVert f \rVert_F \right) \right \rbrace, \quad 
\end{equation}
admits a representation of the form:
\begin{equation}
  f_{\rm opt}(x) = \sum_{m = 1}^M \alpha_m \; \kappa(x^m, x),
\end{equation}
where $\alpha_m \in \mathbb{R}$ for all $1 \le m \le M$.
\end{theorem}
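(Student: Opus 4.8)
The plan is the classical orthogonal-decomposition argument carried out in the Hilbert space $F$. First I would fix an arbitrary $f \in F$ and split it relative to the finite-dimensional subspace $F_{\mathcal{D}} = \mathrm{span}\{\kappa(x^1, \cdot), \dots, \kappa(x^M, \cdot)\} \subseteq F$. Because $F_{\mathcal{D}}$ is finite-dimensional it is closed, so the Hilbert-space projection theorem lets us write $f = f_{\parallel} + f_{\perp}$ uniquely, with $f_{\parallel} \in F_{\mathcal{D}}$ — hence $f_{\parallel}(\cdot) = \sum_{m=1}^M \alpha_m \kappa(x^m, \cdot)$ for some coefficients $\alpha_m \in \mathbb{R}$ — and $f_{\perp}$ orthogonal to every $\kappa(x^m, \cdot)$ in the inner product of Definition~\ref{def:rkhs}.

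The second step is to show that the empirical risk term does not see $f_{\perp}$. By the reproducing property, for each training point $f(x^m) = \langle f, \kappa(x^m, \cdot) \rangle_F = \langle f_{\parallel}, \kappa(x^m, \cdot) \rangle_F + \langle f_{\perp}, \kappa(x^m, \cdot) \rangle_F = f_{\parallel}(x^m)$, where the last cross term vanishes by construction. Since $\hat{\mathcal{R}}_L(f) = \frac{1}{M}\sum_{m=1}^M L(x^m, y^m, f(x^m))$ depends on $f$ only through its values at the $x^m$, we get $\hat{\mathcal{R}}_L(f) = \hat{\mathcal{R}}_L(f_{\parallel})$ exactly.

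Third, I would control the regulariser via Pythagoras in $F$: $\lVert f \rVert_F^2 = \lVert f_{\parallel} \rVert_F^2 + \lVert f_{\perp} \rVert_F^2 \ge \lVert f_{\parallel} \rVert_F^2$, so $\lVert f \rVert_F \ge \lVert f_{\parallel} \rVert_F$, and strict monotonicity of $g$ gives $g(\lVert f \rVert_F) \ge g(\lVert f_{\parallel} \rVert_F)$ with equality if and only if $f_{\perp} = 0$. Combining, $\hat{\mathcal{R}}_L(f) + g(\lVert f \rVert_F) \ge \hat{\mathcal{R}}_L(f_{\parallel}) + g(\lVert f_{\parallel} \rVert_F)$, so replacing any candidate by its projection onto $F_{\mathcal{D}}$ never increases the objective and strictly decreases it unless $f_{\perp}=0$. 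Therefore any minimiser $f_{\rm opt}$ must already lie in $F_{\mathcal{D}}$, i.e.\ admits the representation $f_{\rm opt}(x) = \sum_{m=1}^M \alpha_m \kappa(x^m, x)$; composing with the equivalence of Section~\ref{ssec:rkhs_quantum_models} then yields Eq.~(\ref{eq:preview_optimal}).

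The only point requiring care is the legitimacy of the first step inside the \emph{completed} RKHS: $F$ contains limits of Cauchy sequences, not merely finite kernel combinations, but since $F_{\mathcal{D}}$ is finite-dimensional — hence norm-closed — the orthogonal projection exists and is unique, so no completeness subtleties intrude. The other mild point is that strict monotonicity of $g$ is exactly what upgrades ``there exists a minimiser of this form'' to ``every minimiser has this form''; with only non-strict monotonicity one would state the conclusion as existence. Everything else — linearity of $\langle\cdot,\cdot\rangle_F$, the reproducing property, Pythagoras — is routine.
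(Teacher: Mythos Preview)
Your argument is correct and is exactly the standard orthogonal-projection proof of the representer theorem. Note, however, that the paper does not supply its own proof of Theorem~\ref{thm:representer_quantum}: it simply states the result and refers the reader to \cite{scholkopf2002learning}, Theorem~5.1, for a more general version. The proof given there is precisely the decomposition-plus-Pythagoras argument you wrote down, so in effect you have reconstructed the cited reference rather than an argument original to this paper.
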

Note that the crucial difference to the form in Theorem~(\ref{thm:measurement_expansion_general}) is that $m$ does not sum over arbitrary data from $\mathcal{X}$, but over a finite training data set.
For us this means that the optimal quantum model can be written as
\begin{equation}\label{eq:optimal_model}
    f_{\rm opt}(x) =\sum_{m = 1}^M \alpha_m \; \tr{\rho(x) \rho(x^m)} =  \sum_{m = 1}^M \alpha_m \; |\langle \phi(x) | \phi(x^m) \rangle |^2.
\end{equation}
This in turn defines the measurements $\meas$ of optimal quantum models.
\begin{theorem}[Optimal measurements] \label{thm:opt_meas}
For the settings described in Theorem~\ref{thm:representer_quantum}, the measurement that minimises the regularised empirical risk can be written as an expansion in the training data $x^m$, $m=1 \dots M$,
\begin{equation}\label{eq:opt_meas} 
 \meas_{\rm opt} =   \sum_{m}  \alpha_{m}  \rho(x^{m}),
\end{equation}
with $\alpha_m \in \mathbb{R}$.
\end{theorem}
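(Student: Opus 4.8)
The plan is to chain together the representer theorem (Theorem~\ref{thm:representer_quantum}) with the correspondence between functions in the RKHS of the quantum kernel and quantum models established in Section~\ref{ssec:rkhs_quantum_models}. By the representer theorem applied to the quantum kernel $\kappa(x,x') = \tr{\rho(x')\rho(x)}$, any minimiser of the regularised empirical risk has the form $f_{\rm opt}(x) = \sum_{m=1}^M \alpha_m \kappa(x^m, x)$ with $\alpha_m \in \mathbb{R}$, which is exactly Eq.~(\ref{eq:optimal_model}). So the work that remains is purely to read off the measurement operator that realises this function as a quantum model.

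First I would substitute the definition of the quantum kernel into the representer-theorem expansion and pull the sum inside the trace, using linearity:
\begin{equation}
f_{\rm opt}(x) = \sum_{m=1}^M \alpha_m \tr{\rho(x)\rho(x^m)} = \tr{\rho(x)\left(\sum_{m=1}^M \alpha_m \rho(x^m)\right)}.
\end{equation}
Second, I would define $\meas_{\rm opt} = \sum_{m=1}^M \alpha_m \rho(x^m)$ and check that it is a legitimate measurement operator in the sense used in the paper: it lies in $\mathcal{F}$ (the space of $2^n\times 2^n$ complex matrices) since each $\rho(x^m)$ does and $\mathcal{F}$ is a vector space, and it is Hermitian because each $\rho(x^m) = \ketbra{\phi(x^m)}{\phi(x^m)}$ is Hermitian and the $\alpha_m$ are real. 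Hence $f_{\rm opt}(x) = \tr{\rho(x)\meas_{\rm opt}}$ is a bona fide quantum model in the sense of Definition~\ref{eq:quantum_model}, and by construction it equals the RKHS minimiser, so it minimises the regularised empirical risk over all quantum models (using the equivalence of the two optimisation problems from Section~\ref{ssec:optimsing_rkhs}).

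There is no serious obstacle here — the statement is essentially a restatement of the representer theorem in the language of measurements — but the one point that deserves a careful sentence is the direction of the equivalence: the representer theorem guarantees \emph{a} minimiser of this form, and one must note that the corresponding $\meas_{\rm opt}$ then genuinely minimises Eq.~(\ref{eq:erm}) because optimising over quantum models is equivalent to optimising over $F$ (Section~\ref{ssec:optimsing_rkhs}), so no minimiser is ``lost'' in the translation. I would also remark that, unlike in Theorem~\ref{thm:measurement_expansion_general} where the expansion ran over arbitrary points of $\mathcal{X}$, the sum here is finite and indexed precisely by the $M$ training points, which is the content worth emphasising. If one wants to accommodate the regularised risk in the exact form of Eq.~(\ref{eq:erm}), one simply takes the strictly monotonic increasing function $g(\cdot) = \lambda(\cdot)^2$ so that $g(\|f\|_F) = \lambda\|f\|_F^2 = \lambda\|\meas\|_{\mathcal{F}}^2$ by the norm identity derived in Section~\ref{ssec:optimsing_rkhs}, and Theorem~\ref{thm:representer_quantum} applies directly.
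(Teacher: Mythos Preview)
Your proposal is correct and follows essentially the same approach as the paper: the paper's proof is exactly your ``pull the sum inside the trace'' step, writing $f_{\rm opt}(x) = \sum_m \alpha_m \tr{\rho(x)\rho(x^m)} = \tr{\rho(x)\sum_m \alpha_m \rho(x^m)} = \tr{\rho(x)\meas_{\rm opt}}$. Your additional remarks on Hermiticity of $\meas_{\rm opt}$ and on the equivalence of the two optimisation problems are not in the paper's proof but are worthwhile clarifications.
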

\begin{proof}
This follows directly by noting that
\begin{align}
     f_{\rm opt}(x) & =  \sum_{m = 1}^M \alpha_m \; \tr{ \rho(x)  \rho(x^m)}\\
     & =  \tr{ \rho(x) \sum_{m = 1}^M \alpha_m \rho(x^m)}\\
     & = \tr{ \rho(x)\meas_{\rm opt}}
\end{align}
\end{proof}
As mentioned in the summary and Figure~\ref{fig:training}, in variational circuits we typically only optimise over a subspace of the RKHS since the measurements $\meas$ are constrained by a particular circuit ansatz. We can therefore not guarantee that the optimal measurement can be expressed by the variational ansatz. However, the above guarantees that there will always be a measurement of the form of Eq.~(\ref{eq:opt_meas}) for which the quantum model has a lower regularised empirical risk than the best solution of the variational training.

As an example, we can use the apparatus of linear regression to show that the optimal measurement for a quantum model under least-squares loss can indeed be written as claimed in Eq.~(\ref{eq:opt_meas}). For this I will assume once more that $\mathcal{X} = \mathbb{R}^{N}$ where $N = 2^n$ and $n$ is the number of qubits, and switch to bold notation. I will also use the (here much more intuitive) vectorised notation in which the quantum model $f(x) = \tr{\rho(x) \meas}$ becomes $f(x) = \langle \braakett{ \meas }{ \rho(x) }$, with the vectorised measurement $\kett{\meas } = \sum_k \gamma_k  \kett{ \rho(x^k) } $. 

A well-known result from linear regression states that the vector $\mathbf{w}$ that minimises the least-squares loss of a linear model $f(\mathbf{x}) = \mathbf{w}^T \mathbf{x}$ is given by
\begin{equation}
    \mathbf{w} = (\mathbf{X}^{\dagger}\mathbf{X})^{-1} \mathbf{X}^{\dagger} \mathbf{y}, 
\end{equation} 
if the inverse of $\mathbf{X}^{\dagger}\mathbf{X}$ exist. Here, $\mathbf{X}$ is the matrix that contains the data vectors as rows,
\begin{equation}\mathbf{X} = 
\begin{pmatrix}  
x^1_1 & \hdots & x^1_N \\ 
\vdots & \ddots & \vdots\\
x^M_1 & \hdots & x^M_N 
\end{pmatrix},\end{equation}
and $\mathbf{y}$ is an $M$-dimensional vector containing the target labels. A little trick exposes that $\mathbf{w}$ can be written as a linear combination of training inputs,
\begin{equation} \label{eq:trick_linear_expansion}
\mathbf{w} = \mathbf{X}^{\dagger}\left( \mathbf{X} (\mathbf{X}^{\dagger}\mathbf{X})^{-2} \mathbf{X}^{\dagger} \mathbf{y}\right) = \mathbf{X}^{\dagger} \boldsymbol{\alpha} = \sum_m \alpha_m \mathbf{x}^m,
\end{equation}
where $\boldsymbol{\alpha} = (\alpha_1,\dots, \alpha_M)$.

Since a quantum model is a linear model in feature space, we can associate the vectors in linear regression with the vectorised measurement and density matrix, and immediately derive 
\begin{equation}
    \kett{\meas } = \sum_m  y^m\left( \sum_{m'} \kettbraa{\rho(\mathbf{x}^{m'})}{\rho(\mathbf{x}^{m'}) }  \right)^{-1}  \kett{ \rho(\mathbf{x}^m) } ,
\end{equation}
by making use of the fact that in our notation
\begin{equation}\mathbf{X}^{\dagger}\mathbf{X} \Longleftrightarrow \sum_m  \kettbraa{\rho(\mathbf{x}^{m})}{\rho(\mathbf{x}^{m}) } ,  \end{equation}
and 
\begin{equation}\mathbf{X}^{\dagger}\mathbf{y} \Longleftrightarrow \sum_m y^m \kett{\rho(\mathbf{x}^m)}.
\end{equation}
Note that although this looks like an expansion in the feature states, the ``coefficient'' of $ \kett{\rho(\mathbf{x}^m)}$ still contains an operator. However, with Eq.~(\ref{eq:trick_linear_expansion})
and writing $\sum_m \kettbraa{\rho(\mathbf{x}^m)}{\rho(\mathbf{x}^m) }$ in its diagonal form,
\begin{equation}\sum_m \kettbraa{\rho(\mathbf{x}^m) }{ \rho(\mathbf{x}^m) } = \sum_k h_k \kettbraa{ h_k}{ h_k }, \end{equation}
we have
\begin{equation}
    \kett{ \meas } = \sum_m  \alpha_m \kett{\rho(\mathbf{x}^m) },
\end{equation}
with 
\begin{equation}\alpha_m  = \sum_k h^{-2}_k {\braakett{h_k}{\rho(\mathbf{x}^{m})}} \sum_{m'} y^{m'} \braakett{h_k}{\rho(\mathbf{x}^{m'})}.\end{equation}
The optimal measurement in ``matrix form'' reads
\begin{equation}
    \meas =  \sum_m \alpha_m \rho(\mathbf{x}^m)= \sum_m \alpha_m \ketbra{\phi(\mathbf{x}^m)}{\phi(\mathbf{x}^m)},
\end{equation}
as claimed by the representer theorem.

Of course, it may require a large routine to implement this measurement fully quantumly, since it involves inverting operators acting on the feature space. Alternatively one can compute the desired $\{\alpha_m\}$ classically and use the quantum computer to just measure the kernel. In the last section we will see ideas of how to use quantum algorithms to do the inversion, but these quantum training algorithms are complex enough to require fault-tolerant quantum computers which we do not have available today.

\subsection{The kernel defines which models are punished by regularisation}\label{ssec:regularisation}

In statistical learning theory, the role of the regulariser in the regularised empirical risk minimisation problem is to ``punish'' some functions and favour others. Above, we specifically looked at regularisers of the form $\|f\|^2_{F}$, $f \in F$, which was shown to be equivalent to minimising the norm of the measurement (or the length of the vectorised measurement) in feature space. But what is it exactly that we are penalising here? It turns out that the kernel does not only fix the space of quantum models themselves, it also defines which functions are penalised in regularised empirical risk minimisation problems. This is beautifully described in \cite{scholkopf2002learning} Section 4.3, and I will only give a quick overview here. 

To understand regularisation, we need to have a closer look at the regularising term $\|f\|^2_{F} = \langle f, f \rangle_{F} $. But with the construction of the RKHS it actually remains very opaque what this inner product actually computes. It turns out that for every RKHS $F$ there is a transformation $\Upsilon: F \rightarrow L_2(\mathcal{X})$ that maps functions in the RKHS to square integrable functions on $\mathcal{X}$. What we gain is a more intuitive inner product formed by an integral,
\begin{equation}
    \langle f, f \rangle_{F} = \langle \Upsilon f, \Upsilon f \rangle_{L_2} = \int_{\mathcal{X}} (\Upsilon f(x))^2 dx. 
\end{equation}
The operator $\Upsilon$ can be understood as extracting the information from the model $f$ which gets integrated over in the usual $L_2$ norm, and hence penalised during optimisation. For example, for some kernels this can be shown to be the derivative of functions, and regularisation therefore provably penalise models with ``large'' higher-order derivatives -- which means it favours smooth functions. 

The important point is that every kernel defines a unique transformation $\Upsilon$, and therefore a unique kind of regularisation. This is summarised in Theorem 4.9 in \cite{scholkopf2002learning}, which I will reprint here without proof:
\begin{theorem}[RKHS and Regularization Operators] For every RKHS with reproducing kernel $\kappa$ there exists a corresponding regularization operator $\Upsilon : F \to D$ (where $D$ is an inner product space) such that for all $f \in F$, 
 \begin{equation}
     \langle \Upsilon \kappa(x, \cdot), \Upsilon f(\cdot) \rangle_{D} = f(x),
 \end{equation}
 and in particular
 \begin{equation}\langle \Upsilon \kappa(x, \cdot), \Upsilon \kappa(x', \cdot) \rangle_{D} = \kappa(x, x').
 \end{equation}
    Likewise, for every regularization operator $\Upsilon : F\to D$, where $F$ is some function space equipped with a dot product, there exists a corresponding RKHS $F$ with reproducing kernel $\kappa$ such that these two equations are satisfied.
\end{theorem}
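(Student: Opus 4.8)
The plan is to prove the two directions of the equivalence separately. The ``forward'' direction — every RKHS admits a compatible regularisation operator — is essentially a restatement of the reproducing property and I would dispatch it immediately; the ``converse'' — every regularisation operator induces such an RKHS — carries the genuine content. For the forward direction I would take $D = F$ and let $\Upsilon$ be the identity operator $\mathrm{id}_F$. Then for every $f \in F$ the reproducing property gives
\begin{equation}
 \langle \Upsilon\kappa(x,\cdot),\, \Upsilon f(\cdot)\rangle_D = \langle \kappa(x,\cdot),\, f\rangle_F = f(x),
\end{equation}
and the specialisation $f = \kappa(x',\cdot)$ yields $\langle \Upsilon\kappa(x,\cdot), \Upsilon\kappa(x',\cdot)\rangle_D = \kappa(x,x')$. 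I would then point out that such a $\Upsilon$ is highly non-unique: post-composing with any linear isometry of $F$ onto a closed subspace of some inner-product space $D$ produces another admissible operator, and it is exactly this freedom that lets one realise $\Upsilon$ concretely as, say, a differential operator into $L_2(\mathcal{X})$ for specific kernels (the Gaussian being the textbook case), which is what makes the theorem interesting in practice.

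For the converse, given a regularisation operator $\Upsilon : F_0 \to D$ on a space $F_0$ of functions on $\mathcal{X}$, I would equip $F_0$ with the new bilinear form $\langle f, g\rangle_\kappa := \langle \Upsilon f, \Upsilon g\rangle_D$. Since $\langle f,f\rangle_\kappa = \|\Upsilon f\|_D^2$ vanishes exactly on $\ker\Upsilon$, this becomes a genuine inner product after quotienting by $\ker\Upsilon$ (or restricting to a complement of it); completing the resulting pre-Hilbert space gives a Hilbert space $F$. Assuming the evaluation functionals $\delta_x : f \mapsto f(x)$ are bounded in the $\langle\cdot,\cdot\rangle_\kappa$-norm — this is precisely the structural requirement packaged in the phrase ``$\Upsilon$ is a regularisation operator'', and equivalently says $\Upsilon^\dagger\Upsilon$ has a Green's function — Riesz representation yields for each $x$ an element $\kappa_x \in F$ with $\langle f,\kappa_x\rangle_\kappa = f(x)$, i.e.
\begin{equation}
 \langle \Upsilon f,\, \Upsilon\kappa_x\rangle_D = f(x) \qquad \text{for all } f,
\end{equation}
which is the first displayed identity. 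Defining $\kappa(x,x') := \kappa_x(x') = \langle \kappa_{x'},\kappa_x\rangle_\kappa = \langle \Upsilon\kappa_{x'}, \Upsilon\kappa_x\rangle_D$ then exhibits $\kappa$ as the reproducing kernel of $F$ and gives the second identity for free. Read backwards this is the Green's-function construction used by Sch\"olkopf and Smola: one defines $\kappa(x,\cdot)$ as the solution of $\Upsilon^\dagger\Upsilon\,\kappa(x,\cdot) = \delta_x$, so that $\langle \Upsilon\kappa(x,\cdot), \Upsilon f\rangle_D = \langle \Upsilon^\dagger\Upsilon\,\kappa(x,\cdot), f\rangle = \langle \delta_x, f\rangle = f(x)$.

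The main obstacle is the analytic bookkeeping in the converse: checking that $\langle\cdot,\cdot\rangle_\kappa$ is well defined once $\ker\Upsilon$ is nontrivial, that the abstract completion of $(F_0,\langle\cdot,\cdot\rangle_\kappa)$ can still be identified with a space of honest functions on $\mathcal{X}$ rather than equivalence classes of Cauchy sequences or distributions, and above all that point evaluation is continuous in the new norm so that Riesz applies and $\kappa$ actually exists. I would finesse this the way the cited reference does — by taking the existence of a Green's function for $\Upsilon^\dagger\Upsilon$ as the working hypothesis defining a regularisation operator, which makes $\kappa$ explicit and reduces the two identities to the short formal manipulations above; the only remaining checks are that $\kappa$ is symmetric and positive definite, which follow respectively from self-adjointness of $\Upsilon^\dagger\Upsilon$ and from the representation $\kappa(x,x') = \langle \Upsilon\kappa(x,\cdot), \Upsilon\kappa(x',\cdot)\rangle_D$ together with the argument already used in the manuscript to show that inner products of feature vectors are positive definite.
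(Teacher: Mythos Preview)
The paper does not actually prove this theorem: it is explicitly introduced as ``Theorem 4.9 in \cite{scholkopf2002learning}, which I will reprint here without proof''. So there is nothing in the manuscript to compare your argument against.

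That said, your proposal is the standard treatment and matches the construction in the cited reference. The forward direction via $\Upsilon = \mathrm{id}_F$ is the trivial witness, and your remark about post-composition with isometries correctly explains why nontrivial regularisation operators (differential operators into $L_2$) can arise. For the converse, the pullback inner product $\langle f,g\rangle_\kappa = \langle \Upsilon f, \Upsilon g\rangle_D$, quotienting by $\ker\Upsilon$, completing, and invoking Riesz under the hypothesis of bounded point evaluation is exactly the Sch\"olkopf--Smola route, and your identification of this hypothesis with the existence of a Green's function for $\Upsilon^\dagger\Upsilon$ is the way that reference makes the construction concrete. You have also correctly flagged the genuine analytic subtleties (realising the completion as a function space, continuity of $\delta_x$) and the standard way of sidestepping them. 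There is no gap here beyond what you have already acknowledged.
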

In short, the quantum kernel or data-encoding strategy does not only tell us about universality and optimal measurements, it also fixes the regularisation properties in empirical risk minimisation. Which data encoding actually leads to which regularisation property is still an interesting open question for research.

\subsection{Picking the best quantum model is a low-dimensional (convex) optimisation problem}\label{ssec:convex}

Besides the representer theorem, a second main achievement of kernel theory is to recognise that optimising the empirical risk of convex loss functions over functions in an RKHS can be formulated as a finite-dimensional convex optimisation problem (or in less cryptic language, optimising over extremely large spaces is surprisingly easy when we use training data, something noted in \cite{huang2020power} before). 

The fact that the optimisation problem is finite-dimensional -- and we will see the dimension is equal to the number of training data -- is important, since the feature spaces in which the model classifies the data are usually very high-dimensional, and possibly even infinite-dimensional. This is obviously true for the data-encoding feature space of quantum computations as well -- which is precisely why variational quantum machine learning parametrise circuits with a small number of trainable parameters instead of optimising over all unitaries/measurements. But even if we optimise over all quantum models, the results of this section guarantee that the dimensionality of the problem is limited by the size of the training data set. 

The fact that optimisation is convex means that there is only one global minimum, and that we have a lot of tools to find it \citep{boyd2004convex} - in particular, more tools than mere gradient descent. Convex optimisation problems can be roughly solved in time $\mathcal{O}(M^2)$ in the number of training data. Although prohibitive for large datasets, it makes the optimisation guaranteed to be tractable (and below we will see that quantum computers could in principle help to train with a runtime of $\mathcal{O}(M)$). 

Let me make the statement more precise. Again, it follows from the fact that optimising over the RKHS of the quantum kernel is equivalent to optimising over the space of quantum models.  

\begin{theorem}[Training quantum models can be formulated as a finite-dimensional convex program]\label{thm:convex}
Let $\mathcal{X}$ be a data domain and $\mathcal{Y}$ an output domain, $L: \mathcal{X} \times \mathcal{Y} \times \mathbb{R} \rightarrow [0, \infty)$ be a loss function, $F$ the RKHS of the quantum kernel over a non-empty convex set $\mathcal{X}$ with the reproducing kernel $\kappa$. Furthermore, let $\lambda \geq 0$ be a regularisation parameter and $D = \{(x^m, y^m), m=1,\dots,M\} \subset \mathcal{X} \times \mathcal{Y}$ a training data set. The regularised empirical risk minimisation problem is finite-dimensional, and if the loss is convex, it is also convex. 
\end{theorem}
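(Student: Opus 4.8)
The plan is to combine the two structural facts already established — that minimising a regularised empirical risk over the space of quantum models is the same as minimising it over the RKHS $F$ of the quantum kernel (Section~\ref{ssec:optimsing_rkhs}), and the representer theorem (Theorem~\ref{thm:representer_quantum}) — to collapse the a priori infinite-dimensional problem~(\ref{eq:erm2}) onto $\mathbb{R}^M$, and then to read off convexity from positive semi-definiteness of the kernel Gram matrix together with convexity of $L$ in its prediction argument.

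First I would argue finite-dimensionality. By Section~\ref{ssec:optimsing_rkhs}, problem~(\ref{eq:erm}) is equivalent to~(\ref{eq:erm2}), i.e.\ to minimising $\lambda\|f\|_F^2 + \hat{\mathcal{R}}_L(f)$ over $f \in F$. Set $F_M = \mathrm{span}\{\kappa(x^m, \cdot)\,:\,m = 1,\dots,M\}$, a subspace of $F$ of dimension at most $M$, and write $f = f_\parallel + f_\perp$ with $f_\parallel \in F_M$ and $f_\perp \in F_M^{\perp}$. The reproducing property gives $f(x^m) = \langle f, \kappa(x^m,\cdot)\rangle_F = \langle f_\parallel, \kappa(x^m,\cdot)\rangle_F = f_\parallel(x^m)$ for all $m$, so $\hat{\mathcal{R}}_L$ — which sees $f$ only through $f(x^1),\dots,f(x^M)$ — is unchanged by discarding $f_\perp$, while $\|f\|_F^2 = \|f_\parallel\|_F^2 + \|f_\perp\|_F^2 \ge \|f_\parallel\|_F^2$ shows the regulariser cannot increase either; hence the infimum is unchanged if we restrict to $F_M$ (for $\lambda > 0$ one may instead invoke Theorem~\ref{thm:representer_quantum} directly, since then $g(t) = \lambda t^2$ is strictly increasing). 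Parametrising $f = \sum_{m=1}^M \alpha_m \kappa(x^m,\cdot)$ and using $f(x^m) = (K\boldsymbol{\alpha})_m$ and $\|f\|_F^2 = \boldsymbol{\alpha}^T K\boldsymbol{\alpha}$, with the quantum-kernel Gram matrix $K_{mm'} = \kappa(x^m, x^{m'})$ (real and symmetric, as $\kappa(x,x') = \tr{\rho(x')\rho(x)} = \tr{\rho(x)\rho(x')}$), the training problem becomes
\begin{equation}
    \inf_{\boldsymbol{\alpha}\in\mathbb{R}^M}\; J(\boldsymbol{\alpha}), \qquad J(\boldsymbol{\alpha}) = \lambda\,\boldsymbol{\alpha}^T K\boldsymbol{\alpha} + \frac{1}{M}\sum_{m=1}^M L\!\left(x^m, y^m, (K\boldsymbol{\alpha})_m\right),
\end{equation}
which is finite-dimensional.

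Next I would establish convexity of $J$ when each $L(x^m, y^m, \cdot)$ is convex. The matrix $K$ is positive semi-definite, since $\boldsymbol{\alpha}^T K\boldsymbol{\alpha} = \|\sum_m \alpha_m \kappa(x^m,\cdot)\|_F^2 \ge 0$ (equivalently, the quantum kernel was shown above to be a positive definite function), so $\boldsymbol{\alpha}\mapsto\lambda\boldsymbol{\alpha}^T K\boldsymbol{\alpha}$ is convex for any $\lambda \ge 0$. For each $m$ the map $\boldsymbol{\alpha}\mapsto(K\boldsymbol{\alpha})_m$ is affine, so $\boldsymbol{\alpha}\mapsto L(x^m,y^m,(K\boldsymbol{\alpha})_m)$ is a convex function composed with an affine map, hence convex; a non-negative sum of convex functions is convex, so $J$ is convex on the convex set $\mathbb{R}^M$ (and on any convex subset, should one add constraints on $\meas$). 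This displays training as a finite-dimensional convex program.

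I expect the main obstacle to be the bookkeeping at the boundary case $\lambda = 0$: the representer theorem as stated needs a strictly increasing $g$, which fails there, so the orthogonal-projection argument in the second paragraph is the one that actually covers the stated range $\lambda \ge 0$; one should also note that existence of a minimiser (coercivity of $J$) is only guaranteed for $\lambda > 0$, whereas convexity and finite-dimensionality of the program hold for all $\lambda \ge 0$. A minor but worth-stating point is the reality and symmetry of the quantum kernel — both established earlier — which is what makes $K$ a genuine real symmetric PSD matrix and keeps the identity $f(x^m) = (K\boldsymbol{\alpha})_m$ free of conjugations.
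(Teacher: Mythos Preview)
Your proposal is correct and follows essentially the same route as the paper: invoke the representer theorem to reduce~(\ref{eq:erm2}) to an $M$-parameter problem, rewrite the regulariser as the quadratic form $\boldsymbol{\alpha}^T K \boldsymbol{\alpha}$ with $K$ positive semi-definite, and observe that a convex loss composed with the affine map $\boldsymbol{\alpha}\mapsto (K\boldsymbol{\alpha})_m$ stays convex. Your orthogonal-projection argument to cover $\lambda = 0$ is in fact more careful than the paper, which simply cites Theorem~\ref{thm:representer_quantum} without noting that its hypothesis of a strictly increasing $g$ fails at $\lambda = 0$.
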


\begin{proof}
Recall that according to the Representer Theorem~\ref{thm:representer_quantum}, the solution to the regularised empirical risk minimisation problem 
\begin{equation}f_{\rm opt} = \inf_{f \in F} \lambda \Vert f \Vert^2_{F} + \hat{\mathcal{R}}_{L}(f)  \end{equation} 
has a representation of the form 
\begin{equation}f_{\rm opt}(x) = \sum_m \alpha_m \tr{\rho(x^m) \rho(x)}.\end{equation}
We can therefore write 
\begin{equation}
    \hat{\mathcal{R}}_{L}(f) = \frac{1}{M} \sum_{m} L(x^m, y^m, \sum_{m'} \alpha_{m'} \kappa(x^m, x^{m'})).
\end{equation}
If the loss $L$ is convex, then this term is also convex, and it is $M$-dimensional since it only involves the $M$ degrees of freedom $\alpha_m$. 

Now let us turn to the regularisation term and try to show the same. Consider
\begin{equation} \Vert f \Vert^2_{F} = \sum_{m, m'} \alpha_m \alpha_{m'} \tr{\rho(x^m)\rho(x^{m'})} =  \sum_{m, m'} \alpha_m \alpha_{m'} \kappa(x^m, x^{m'}) = \boldsymbol\alpha^T \mathbf{K} \boldsymbol\alpha, \end{equation}
where $\mathbf{K} \in \mathbb{R}^{M \times M}$ is the kernel matrix or \textit{Gram matrix} with entries $K_{m, m'} = \kappa(x^m, x^{m'})$, and $\boldsymbol\alpha = (\alpha_1, \dots, \alpha_M)$ is the vector of coefficients $\alpha_m$. Since $\mathbf{K}$ is by definition of the kernel positive definite, this term is also convex. Both $\boldsymbol\alpha$ and $\mathbf{K}$ are furthermore finite-dimensional.

Together, training a quantum model to find the optimal solution from Eq.~(\ref{eq:optimal_model}) can be done by solving the optimisation problem
\begin{align} \label{eq:opt_problem}
    \inf_{\boldsymbol\alpha \in \mathbb{R}^M} \frac{1}{M} \sum_{m} L(x^m, y^m, \sum_{m'} \alpha_{m'} \kappa(x^m, x^{m'}))  + \lambda \boldsymbol\alpha^T \mathbf{K} \boldsymbol\alpha,
\end{align}
which optimises over $M$ trainable parameters, and is convex for convex loss functions.
\end{proof}

A \textit{support vector machine} is a special case of kernel-based training which uses a special convex loss function, namely the hinge loss, for $L$: 
\begin{equation}
    L(f(x), y) = \max(0, 1- f(x)y),
\end{equation}
where one assumes that $y \in \{-1, 1\}$. As derived in countless textbooks, the resulting optimisation problem can be constructed from geometric arguments as maximising the ``soft'' margin of the closest vectors to a decision boundary. Under this loss, Eq.~(\ref{eq:opt_problem}) reduces to
\begin{equation}\label{eq:alphas}
    \mathbf{\alpha}_{\rm opt} = \max_{\mathbf{\alpha}}  \; \sum_m \alpha_m - \frac{1}{2} \sum_{m, m'} \alpha_{m} \alpha_{m'} y^m y^{m'} \kappa(x^{m}, x^{m'}).
\end{equation}
Training a support vector machine with hinge loss and a quantum kernel $\kappa$ is equivalent to finding the general quantum model that minimises the hinge loss. The ``quantum support vector machine'' in \citep{schuld2019quantum, havlivcek2019supervised} is therefore not one of many ideas to build a hybrid classifier, it is a generic blueprint of how to train quantum models in a kernel-based manner. 

\section{Should we switch to kernel-based quantum machine learning?}

The fact that quantum models can be formulated as kernel methods with a quantum kernel raises an important question for current quantum machine learning research: how do kernel-based models, i.e., solutions to the problem in Eq.~(\ref{eq:opt_problem}), compare to models whose measurements are trained variationally? Let us revisit Figure~\ref{fig:training} in light of the results of the previous section. 

We saw in Section \ref{ssec:convex} how kernel-based training optimises the measurement over a subspace spanned by $M$ encoded training inputs by finding the best coefficients $\alpha_m$, $m=1\dots M$. We also saw in Section \ref{ssec:meas_expansion} that this subspace contains the globally optimal measurement. Variational training instead optimises over a subspace defined by the parametrised ansatz, which may or may not overlap with the training-data subspace, and could therefore not have access to the global optimum. The advantages of kernel-based training are therefore that we are guaranteed to find the globally optimal measurement over all possible quantum models. If the loss is convex, the optimisation problem is furthermore of a favourable structure that comes with a lot of guarantees about the performance and convergence of optimisation algorithms. But besides these great properties, in classical machine learning with big data, kernel methods were superseded by neural networks or approximate kernel methods \citep{rahimi2007random} because of their poor scaling. Training involves computing the pair-wise distances between all training data in the Gram matrix of Eq.~(\ref{eq:opt_problem}), which has at least a runtime of $\mathcal{O}(M^2)$ in the number of training samples $M$.\footnote{Note that this is also true when using the trained model for predictions, where we need to compute the distance between a new input to any training input in feature space as shown in Eq.~(\ref{eq:optimal_model}). However, in maximum margin classifiers, or support vector machines in the stricter sense, most $\alpha_m$ coefficients are zero, and only the distances to a few ``support vectors'' are needed.} In contrast, training neural networks takes time $\mathcal{O}(M)$ that only depends linearly on the number of training samples. Can the training of variational quantum circuits offer a similar advantage over kernel-based training?

The answer is that it depends. So far, training variational circuits with gradient-based methods on hardware is based on so-called parameter-shift rules \cite{mitarai2018quantum, schuld2019evaluating} instead of backpropagation. This strategy introduces a linear scaling with the number of parameters $|\theta|$, and the number of circuits that need to be evaluated to train a variational quantum model therefore grows with $\mathcal{O}(|\theta| M)$. If the number of parameters in an application grows sufficiently slowly with the dataset size, variational circuits will almost be able to match the good scaling behaviour of neural networks, which is an important advantage over kernel-based training. But if, like in neural networks, the number of parameters in a variational ansatz grows linearly with the number of data, variational quantum models end up having the same quadratic scaling as the kernel-based approach regarding the number of circuits to evaluate. Practical experiments with $10-20$ parameters and about $100$ data samples show that the constant overhead of gradient calculations on hardware make kernel-based training in fact much faster for small-scale applications.\footnote{See \url{https://pennylane.ai/qml/demos/tutorial_kernel_based_training.html}.} In addition, there is  no guarantee that the final measurement is optimal, we have high-dimensional non-convex training landscapes, and the additional burden of choosing a good variational ansatz. In conclusion, the kernel perspective is not only a powerful and theoretically appealing alternative to think about quantum machine learning, but may also speed up current quantum machine learning methods significantly. 

As a beautiful example of the mutually beneficial relation of quantum computing and kernel methods, the story does not end here. While all of the above is based on models evaluated on a quantum computer but trained classically, convex optimisation problems happen to be exactly the kind of thing quantum computers are good at \citep{harrow2009quantum}. We can therefore ask whether quantum models could not in principle be \textit{trained} by quantum algorithms. ``In principle'' alludes to the fact that such algorithms would likely be well beyond the reach of near-term devices, since training is a more complex affair that requires fully error-corrected quantum computers which we do not have yet.

The reasons why quantum training could help to lower this scaling are hidden in results from the early days of quantum machine learning, when quantum-based training was actively studied in the hope of finding exponential speedups for classical machine learning \citep{wiebe2012quantum, rebentrost2014quantum, lloyd2014quantum}. While these speedups only hold up under very strict assumptions of data loading oracles, they imply quadratic speedups for rather general settings (see also Appendix~\ref{app:outlook}). They can be summarised as follows: \textit{given a feature map implemented by a fault-tolerant quantum computer, we can train kernel methods in time that grows linearly in the data. } If a kernel can be implemented as a quantum computation (like the Gaussian kernel \citep{chatterjee2016generalized}), this speedup would also hold for ``classical models'' -- which are then merely run on a quantum computer. 
 
Of course, fault-tolerant quantum computers may still take many years to develop and are likely to have a large constant overhead due to the expensive nature of quantum error correction. But in the longer term, this shows that the use of quantum computing is not only to implement interesting kernels. Quantum computers have the potential to become a game changer for kernel-based machine learning in a similar way to how GPU-accelerated hardware enabled deep learning.

\section*{Acknowledgements}
I want to thank Johannes Jakob Meyer, Nathan Killoran, Olivia di Matteo and Filippo Miatto, Nicolas Quesada and Ilya Sinayskiy for their time and helpful comments.


%

\appendix

\section{Proof of Theorem~\ref{thm:kernel_fourier}} \label{app:fourier_proof}

First, note that we are able to assume without loss of generality that the encoding generator $G$ is diagonal because one can diagonalise Hermitian operators as $G = V e^{-i x_i \Sigma} V^{\dagger}$ with \begin{equation}
    e^{-i x_i \Sigma} = 
    \begin{pmatrix}
    e^{-i x_i \lambda_1} & \vdots & 0 \\
    0 & \ddots &  \\
    0 & \vdots &  e^{-i x_i \lambda_d}
    \end{pmatrix}
\end{equation}
where $\{\lambda_1, \dots, \lambda_d\}$ are the eigenvalues of $G$. Formally one can ``absorb'' $V, V^{\dagger}$ into the arbitrary circuits $W$ before and after the encoding gate. The remainder is just a matter of writing the matrix multiplications that represent the quantum circuit as a sum in the computational basis, and trying to introduce notation that hides irrelevant complexity:

\begin{align} \label{eq:eff_kernel_fourier}
    \kappa (\mathbf{x}, \mathbf{x}') &=  |\braket{\phi(\mathbf{x}')}{\phi(\mathbf{x})} |^2 \\
    &= \left| \bra{0} (W^{(1)})^{\dagger} (e^{-i x'_1 \Sigma} )^{\dagger} \cdots (e^{-i x'_N \Sigma} )^{\dagger} \underbrace{(W^{(N+1)})^{\dagger}  W^{(N+1)}}_{\mathbbm{1}} e^{-i x_N \Sigma}  \cdots e^{-i x_1 \Sigma}  W^{(1)}\ket{0}\right|^2\\
    &= \left|\bra{0} (W^{(1)})^{\dagger} (e^{-i x'_1 \Sigma} )^{\dagger} \cdots (e^{-i x'_N \Sigma} )^{\dagger}  e^{-i x_N \Sigma}  \cdots e^{-i x_1 \Sigma}  W^{(1)}\ket{0}\right|^2\\
    &= \left|\sum_{j_1, \dots, j_N = 1}^{d} \sum_{k_1, \dots, k_N = 1}^{d}  e^{-i \left( \lambda_{j_1}x_1 - \lambda_{k_1}x'_1 + \dots + \lambda_{j_N}x_N - \lambda_{k_N}x'_N) \right)} \left( W^{(1)}_{1 k_1} \dots W^{(N)}_{k_{N-1} k_N} \right)^* W^{(N)}_{j_N j_{N-1}}   \dots W^{(1)}_{j_1 1} \right|^2\\
     &= \left|\sum_{\mathbf{j}} \sum_{\mathbf{k}} e^{-i \left( \Lambda_{\mathbf{j}}\mathbf{x} - \Lambda_{\mathbf{k}}\mathbf{x}' \right) } (w_{\mathbf{k}})^* w_{\mathbf{j}} \right|^2\\
     &= \sum_{\mathbf{j}} \sum_{\mathbf{k}}\sum_{\mathbf{h}} \sum_{\mathbf{l}}  e^{-i \left( \Lambda_{\mathbf{j}} - \Lambda_{\mathbf{l}} \right) \mathbf{x} } e^{i \left( \Lambda_{\mathbf{k}} - \Lambda_{\mathbf{h}} \right) \mathbf{x}' } (w_{\mathbf{k}}w_{\mathbf{h}})^* w_{\mathbf{j}}w_{\mathbf{l}} 
\end{align}
Here, the scalars $W^{(i)}_{a b}$, $i=1, \dots, N$, refer to the element $\bra{a} W^{(i)}\ket{b}$ of the unitary operator $W^{(i)}$, the bold multi-index $\mathbf{j}$ summarises the set $(j_1, \dots, j_N)$ where $j_i \in \{1, \dots, d\}$ and $\Lambda_{\mathbf{j}}$ is a vector containing the eigenvalues selected by the multi-index (and similarly for $\mathbf{k}, \mathbf{h}, \mathbf{l}$).\\

We can now summarise all terms where $ \Lambda_{\mathbf{j}} - \Lambda_{\mathbf{l}} = \mathbf{s}$ and $ \Lambda_{\mathbf{k}} - \Lambda_{\mathbf{h}} = \mathbf{t}$, in other words where the differences of eigenvalues amount to the same vectors $\mathbf{s}, \mathbf{t}$. Then
\begin{align}
      \kappa (\mathbf{x}, \mathbf{x}')   &= \sum_{\mathbf{s}, \mathbf{t} \in \Omega}  e^{-i \mathbf{s} \mathbf{x} } e^{i \mathbf{t} \mathbf{x}' } \sum_{\mathbf{j}, \mathbf{l} | \Lambda_{\mathbf{j}}-\Lambda_{\mathbf{l}} = \mathbf{s} } \; 
      \sum_{\mathbf{k}, \mathbf{h} | \Lambda_{\mathbf{k}}-\Lambda_{\mathbf{h}} = \mathbf{t} } w_{\mathbf{j}}w_{\mathbf{l}} (w_{\mathbf{k}}w_{\mathbf{h}})^*\\
      &= \sum_{\mathbf{s}, \mathbf{t} \in \Omega}  e^{-i \mathbf{s} \mathbf{x} } e^{i \mathbf{t} \mathbf{x}' }c_{\mathbf{st}}.
\end{align}
The frequency set $\Omega$ contains all vectors $\{ \Lambda_{\mathbf{j}} - \Lambda_{\mathbf{k}}\}$ with $\Lambda_{\mathbf{j}} = (\lambda_{j_1}, \dots , \lambda_{j_N}) $, $j_1,\dots, j_N \in [1, \dots, d] $.
Let me illustrate this rather unwieldy notation with our standard example of encoding a real scalar input $x$ via a Pauli-X rotation. 
\begin{example}
Consider the embedding from Example~\ref{ex:enc_kernel}. We have $W^{(1)} = W^{(2)} = \mathbbm{1}$. With a singular value decomposition one can write the rotation operator as 
\begin{equation}
R_x(x) = e^{-i x\frac{1}{2} \sigma_x} = V^{\dagger} e^{-i x\frac{1}{2} \Sigma} V,
\end{equation}
with 
\begin{equation}
V =\frac{1}{\sqrt{2}} \begin{pmatrix} 1& 1 \\ -1& 1\end{pmatrix}.
\end{equation}
The unitary operators $V, V^{\dagger}$ can be absorbed into the general unitaries applied before and after the encoding, which sets $W^{(1)} = V^{\dagger}$ and $W^{(2)} = V$. The remaining $\frac{1}{2} \Sigma$ is a diagonal operator with eigenvalues $\{\lambda_1=-\frac{1}{2}, \lambda_2=\frac{1}{2}\}$. We get
\begin{equation}
    \kappa(x, x') 
    = \left|\sum_{j = 1}^{2} \sum_{k= 1}^{2} \sum_{i=1}^2 e^{-i (\lambda_{j}x- \lambda_{k}x') } \; (V_{1 k})^*  (V_{k i})^*  V_{i j} V_{j 1} \right|^2.
\end{equation}
Due to unitarity, inner products of different rows/columns of $V$, $V^{\dagger}$ are zero, and so  $\sum_{i=1}^2 (V_{k i})^*  V_{i j} = \delta_{k j}$, leading to
\begin{align} 
    \kappa(x, x') 
    &= \left|\sum_{j = 1}^{2}  e^{-i \lambda_{j}(x- x') } \; (V_{1 j})^* V_{j 1} \right|^2\\
    &= \left|e^{-i \lambda_1(x-x' )} \; (V_{1 1})^* V_{1 1}  
        + e^{-i \lambda_2(x-x') } \; (V_{1 2})^* V_{2 1}\right|^2  \\ 
    &= \left|\frac{1}{2}e^{i \frac{x- x'}{2}}  + \frac{1}{2}e^{-i \frac{x - x'}{2} }  \right|^2\\
    &=| \cos\left(\frac{x- x'}{2}\right)|^2\\
    &= \cos^2\left(\frac{x- x'}{2}\right).
\end{align}
This is the same result as in the ``straight'' computation from Eq.~(\ref{eq:example_model}).
\end{example}

\section{Convex optimisation with quantum computers}\label{app:outlook}

The family of quantum algorithms for convex optimisation in machine learning consists of many variations, but is altogether based on results that establish fast linear algebra processing routines for quantum computers. They are very technical in design, which is why they may not be easily accessible to many machine learning researchers (or in fact, for anyone who does not spend years of her life studying quantum computational complexity). This is why I will only summarise the results from a high-level perspective here. 

\begin{itemize}
    \item Given access to a quantum algorithm that encodes data into quantum states, we can prepare a mixed quantum state $\rho$ representing a $M \times M$ kernel Gram matrix in time $\mathcal{O}(MN)$, where $N$ is the size of the inputs $\mathbf{x} \in \mathbb{R}^N$ (see \cite{rebentrost2014quantum} or \cite{schuld2018supervised} Section 6.2.5),
    \item We can prepare a quantum state $\ket{\mathbf{y}}$ representing $M$ binary labels as amplitudes in time $\mathcal{O}(M)$ (see for example \cite{iten2016quantum}, or \cite{schuld2018supervised} Section 5.2.1).
    \item Given $\ket{\mathbf{y}}$, as well as $k \in \mathcal{O}(\epsilon^{-1})$ ``copies'' of $\rho(\mathbf{x})$ (meaning that we have to repeat the first step $k$ times), we can prepare $\ket{\boldsymbol\alpha} = \rho^{-1}(\mathbf{x}) \ket{\mathbf{y}}$, a state whose amplitudes correspond to the coefficients $\boldsymbol{\alpha}$ in Theorem~(\ref{thm:convex}), to precision $\epsilon$ in time $\mathcal{O}(k \log d)$, where $d$ is the rank of $\rho$ (see \cite{lloyd2014quantum}, where this quantum algorithm was called ``quantum principal component analysis'', or \cite{schuld2018supervised} Section 5.4.3).
    \item We can estimate the amplitudes of $\ket{\boldsymbol\alpha}$ in time $\mathcal{O}(S/\tilde{\epsilon}^2)$ to precision $\tilde{\epsilon}$, where $S \leq M$ is the number of nonzero amplitudes (following from standard probability theory applied to quantum measurements, or \cite{schuld2018supervised} Section 5.1.3).
\end{itemize}
Overall, this is a recipe to compute the $S$ coefficients of the support vectors in time that is linear in the number of data points, a feat that is unlikely to be possible with a classical computer, at least not without imposing more structure on the problem, or allowing for heuristic results.

\end{document}